\theoremstyle{plain}
\newtheorem{theorem}{Theorem}[section]
\newtheorem{lemma}[theorem]{Lemma}
\newtheorem{corollary}[theorem]{Corollary}
\theoremstyle{definition}
\newcommand{\F}{\mathbb F}
\newcommand{\N}{\mathrm N}
\newcommand{\cC}{\mathcal C}
\newcommand{\Aut}{\mathrm{Aut}}
\newcommand{\GL}{\mathrm{GL}}
\newcommand{\PG}{\mathrm{PG}}
\newcommand{\GaL}{\Gamma\mathrm{L}}
\newcommand{\PGaL}{\mathrm{P}\Gamma\mathrm{L}}
\newcommand{\RN}[1]{%
  \textup{\uppercase\expandafter{\romannumeral#1}}%
}
\newcommand{\rn}[1]{%
  \textup{\lowercase\expandafter{\romannumeral#1}}%
}
\title{On the equivalence issue of a class of $2$-dimensional linear maximum rank-metric codes}
\author{Somi Gupta, Giovanni Longobardi and Rocco Trombetti}
\date{}
\begin{document}

\maketitle

\abstract{In [A. Neri, P. Santonastaso, F. Zullo.
\textit{Extending two families of maximum rank distance codes}], the authors extended the family of $2$-dimensional $\F_{q^{2t}}$-linear MRD codes recently found in [G.~Longobardi, G.~Marino, R.~Trombetti, Y.~Zhou.
\textit{A large family of maximum scattered linear sets of $\PG(1,q^n)$ and their associated MRD codes}]. Also, for $t \geq 5$ they determined  equivalence classes of the elements in this new family and provided the exact number of inequivalent codes in it. In this article, we complete the study of the equivalence issue removing the restriction $t \geq 5$.  Moreover, we prove that in the case when $t=4$, the linear sets of the projective line $\mathrm{PG}(1,q^{8})$ ensuing from codes in the relevant family, are not equivalent to any one known so far.}

\bigskip

\noindent
\textit{AMS subject classifications:} 05B25, 51E20, 51E22.\\

\medskip

\noindent
\textit{Keywords:} linearized polynomial, finite field,
finite projective space,  linear set, rank metric code

\section{Introduction}
Let $n \in \mathbb{N}$, $q$ a prime power and denote by ${\mathcal L}_{n,q}[x]$ the vector space of all {$\mathbb{F}_q$-linearized polynomials} with coefficients over $\F_{q^n}$ (also known as $q$-{\it polynomials}); i.e., $${\mathcal L}_{n,q}[x]= \Big  \{\sum_{i=0}^{r}c_ix^{q^i} \, \colon \, c_i \in \F_{q^n} \,\text{ and }\, r \geq 0 \Big \}.$$ It is well known that \[\tilde{{\mathcal L}}_{n,q}[x] := {\mathcal L}_{n,q}[x]/(x^{q^n}-x) = \Big  \{\sum_{i=0}^{n-1}c_ix^{q^i} \, \colon \, c_i \in \F_{q^n} \Big \},\] is isomorphic to the algebra ${\rm End}_{\F_q}(\F_{q^n})$ of the $\F_q$-linear endomorphisms of $\F_{q^n}$, \cite{lidl_finite_1997}. Moreover, the distance between two different such endomorphisms $f(x)$ and $g(x)$, defined as 
$$d(f,g)={\rm dim}({\rm im}(f-g)),$$ gives rise to a metric on $\tilde{{\mathcal L}}_{n,q}[x]$, which is called the {\it rank metric}. In general, if $\mathcal{C}$ and $\mathcal{C}'$ are two subsets of $\tilde{{\mathcal L}}_{n,q}[x]$, we say that they are {\it equivalent} if there exist $\rho \in {\rm Aut}(\F_{q^n})$ and two permutation polynomials $L_1(x)$ and  $L_2(x)$ such that \[ \mathcal{C}'= \{ (L_1 \circ f^{\rho} \circ L_2 )(x) \, : \, f \in \mathcal{C} \},\] where $f^{\rho}(x)=\big ( \sum a_ix^{q^i} \big )^{\rho} = \sum a^{\rho}_ix^{q^i}$. We denote by $\tau=(L_1,L_2,\rho),$ the equivalence defined above and consequently denote by $\tau(f)$ the elements in $\mathcal{C}'$, see \cite{morrison_equivalence_2013}. 

A \textit{linear rank-metric} code ${\cal C} \leq \tilde{{\mathcal L}}_{n,q}[x]$ is an $\F_q$-subspace of the vector space $\tilde{{\mathcal L}}_{n,q}[x]$ endowed with the above mentioned rank metric $d(\cdot,\cdot)$. 
  The minimum of the set of all distances between two distinct words of a linear rank-metric code ${\cal C} \leq \tilde{{\mathcal L}}_{n,q}[x]$, $| \cC| \geq 2$, is denoted by $d_{\cal C},$ and one has \[d_{\cal C}={\rm min}\{{\rm dim}({\rm im}(f)) \,:\, f \in {\cal C} \setminus \{0\}\}.\] 
 
 If ${\rm dim}_{\F_{q}}({\cal C})=k$, then $d_{\cal C}\leq n-\frac{k}{n}+1$ and in the case where the equality is attained, ${\cal C}$ is said to be a {\it linear maximum rank-metric code} (linear MRD code, for short). 
 
 \medskip
 
\noindent Due to their application in network coding, the interest
in the rank-metric codes has intensified over the past years and many recent papers have been
devoted to their study. In addition, linear rank metric codes are also linked with interesting algebraic and geometric objects like {\it $q$-polymatroids} and {\it linear sets} of the projective space, see \cite{qpoliymatroids, connection}. 
 
 \medskip
\noindent Motivated by what is stated in Section $5$ of \cite{neri_santonastaso_zullo}, in this article, we will focus on linear MRD codes with minimum distance $d=n-1$ that are $\F_{q^n}$-subspaces of $\mathcal{\tilde{L}}_{n,q}[x]$. Precisely, in \cite[Theorem 3.8]{neri_santonastaso_zullo}, the authors introduce a new class of such MRD codes extending two constructions exhibited in \cite{longobardi_zanella} and \cite{longobardi_marino_trombetti_zhou}. This family is denoted by the symbol ${\mathcal C}_{h,t,s}$, where $t \in \mathbb{N}$ and $t \geq 3$, $h \in \F_{q^{2t}}$ such that $h^{q^t+1} = -1$ and $s \in \mathbb{Z}$ such that $(s,2t)=1$. Also, in \cite{neri_santonastaso_zullo}, it was proven that if $t >4,$ then ${\mathcal C}_{h,t,s}$ is not equivalent to any other example known in the literature so far.

Moreover, the following result was proven concerning the equivalence of two elements in ${\mathcal C}_{h,t,s}$.

\begin{theorem}\cite[Theorem 4.6]{neri_santonastaso_zullo}\label{th:equivalencebetween_NSZ} Let $t \geq 5$ and consider \,${\mathcal C}_{h,t,s}$ and ${\mathcal C}_{k,t,\ell}$ such that $(s,n)=1=(\ell,n)$. Then the codes ${\mathcal C}_{h,t,s}$ and  $\mathcal{C}_{k,t,\ell}$  are equivalent if and only if one of the following collection of conditions are satisfied:
\begin{itemize}
\item[I.\,] $s\equiv\ell \pmod n$, and there exists $\rho \in \Aut(\F_{q^n})$ such that \[\rho(h)= \begin{cases} \pm k, & \, \text{\rm if} \, t \not\equiv 2 \,\pmod 4 \\ \lambda k, \text{ \rm where }  \lambda^{q^2+1} = 1 & \text{ \rm if} \, t \equiv 2 \,\pmod 4\,  \end{cases} \]\\

\item[II.\,] $s\equiv-\ell \,\pmod n$, and there exists $\rho \in \Aut(\F_{q^n})$ such that \[\rho(h)= \begin{cases} \pm k^{-1}, & \, \text{\rm if} \, t \not\equiv 2\pmod \,4 \\ \lambda k^{-1}, \text{ \rm where }  \lambda^{q^2+1} = 1 & \text{ \rm if} \, t \equiv 2 \pmod \,4  \end{cases} \]\\
\item[III.\,] if $s\equiv(t-1)\ell \, \pmod n$, $t$ even, and there exists $\rho \in \Aut(\F_{q^n})$ such that \[\rho(h)= \begin{cases} \pm k, & \, \text{\rm if} \, t \not\equiv 2 \pmod 4 \\ \lambda k, \text{ \rm where }  \lambda^{q^2+1} = 1 & \text{ \rm if} \, t \equiv 2 \pmod 4 \end{cases} \]\\
\item[IV.\,] $s\equiv(t+1)\ell \, \pmod n$, $t$ even,  and there exists $\rho \in \Aut(\F_{q^n})$ such that \[\rho(h)= \begin{cases} \pm k^{-1}, & \, \text{\rm if} \, t \not\equiv 2 \, \pmod 4 \\ \lambda k^{-1}, \text{ \rm where }  \lambda^{q^2+1} = 1 & \text{ \rm if} \, t \equiv 2 \,\pmod 4 \end{cases} .\]\\
\end{itemize}
\end{theorem}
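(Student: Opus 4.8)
The plan is to convert the rank-metric equivalence of the two codes into a functional identity between the linearized polynomials defining them, and then to analyse that identity by matching the $q$-monomials that occur on its two sides. First I would invoke the standard dictionary (see \cite{connection, morrison_equivalence_2013}): the code $\mathcal{C}_{h,t,s}$ corresponds to the maximum scattered $\F_q$-subspace $U_{h,t,s}=\{(x,f_{h,t,s}(x))\colon x\in\F_{q^{2t}}\}$ of $\F_{q^{2t}}^2$, where $f_{h,t,s}$ is the $\F_q$-linearized polynomial of \cite{neri_santonastaso_zullo, longobardi_marino_trombetti_zhou} whose set of $q$-exponents is $\Lambda_s=\{s,\,s(t-1),\,s(t+1),\,s(2t-1)\}\subseteq\Z/2t\Z$ and whose coefficients are fixed expressions in $h$; and two such codes are equivalent precisely when the subspaces are $\GaL(2,q^{2t})$-equivalent, up to also replacing one of them by the graph of the corresponding adjoint polynomial. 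Since $s(t+1)\equiv-s(t-1)$ and $s(2t-1)\equiv-s\pmod{2t}$, one has $\Lambda_s=-\Lambda_s=\{\pm s,\pm s(t-1)\}$, which for $t\ge3$ and $(s,2t)=1$ has four distinct elements, all coprime to $2t$ if and only if $t$ is even. Unwinding the $\GaL$-action, the criterion becomes: $\mathcal{C}_{h,t,s}\sim\mathcal{C}_{k,t,\ell}$ if and only if there exist $\rho\in\Aut(\F_{q^{2t}})$ and $a,b,c,d\in\F_{q^{2t}}$ with $ad\neq bc$ such that $ax+cf^{\rho}(x)$ is a permutation polynomial and $g(ax+cf^{\rho}(x))=bx+df^{\rho}(x)$ in $\tilde{\mathcal{L}}_{2t,q}[x]$, where $f:=f_{h,t,s}$, $g:=f_{k,t,\ell}$ and $f^{\rho}$ is $f$ with $\rho$ applied to its coefficients, or the same identity with $\widehat{g}$ in place of $g$; since $\Lambda_s=-\Lambda_s$ and $h^{q^t}=-h^{-1}$, the adjoint $\widehat{f}_{h,t,s}$ again belongs to the family, with the same $s$ and with $h$ replaced by a conjugate of $h^{-1}$ — this is what will produce the inverse $k^{-1}$ in items II and IV.

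For the ``if'' direction I would, for each of I--IV, exhibit an equivalence $\tau=(L_1,L_2,\rho)$ explicitly. In I and II the maps $L_1,L_2$ can be chosen to be monomials and the identity above reduces to a comparison of four coefficients; the split according to $t\bmod4$ is forced by the elementary fact that $q^2+1\mid q^t+1$ exactly when $t\equiv2\pmod4$, so precisely in that case any scalar $\lambda$ with $\lambda^{q^2+1}=1$ also satisfies $\lambda^{q^t+1}=1$ and may be absorbed without violating $h^{q^t+1}=-1$, whereas otherwise $\gcd(q^2+1,q^t+1)\le2$ leaves only $\lambda=\pm1$. Items III and IV use in addition that, for $t$ even, $\gcd(t\pm1,2t)=1$ and $(t-1)^2\equiv1\pmod{2t}$, so that replacing the parameter $\ell$ by $(t-1)\ell$ (resp.\ $(t+1)\ell$) permutes $\Lambda_\ell$; following the coefficients through this permutation, combined with the adjoint in IV, produces the claimed equivalences.

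For the ``only if'' direction — the substantial part — I would expand $g(ax+cf^{\rho}(x))=\sum_{j\in\Lambda_\ell}\gamma_j\,(ax+cf^{\rho}(x))^{q^j}$, where $\gamma_j$ is the coefficient of $x^{q^j}$ in $g$: its $q$-exponents lie in $\Lambda_\ell\cup(\Lambda_\ell+\Lambda_s)$, those on the right lie in $\{0\}\cup\Lambda_s$, and the coefficient of each ``cross'' monomial $x^{q^{i+j}}$ $(i\in\Lambda_s,\ j\in\Lambda_\ell)$ is $c^{q^j}$ times a nonzero constant. A term-by-term comparison then shows that when $t\ge5$ (so $2t\ge10$) a direct inspection of the $q$-exponents leaves no admissible configuration in which the cross monomials cancel against one another or against $\{0\}\cup\Lambda_s$, forcing $c=0$; hence $a\neq0$, $\Lambda_\ell\subseteq\{0\}\cup\Lambda_s$ and, since $0\notin\Lambda_\ell$ and $|\Lambda_\ell|=|\Lambda_s|=4$, $\Lambda_\ell=\Lambda_s$. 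Because $(s,2t)=(\ell,2t)=1$, this equality is equivalent to $s\in\{\ell,-\ell,(t-1)\ell,(t+1)\ell\}\pmod{2t}$, the last two options being available only for $t$ even (otherwise $(t\pm1)\ell$ is even while $s$ is odd). Finally, in each of the four sub-cases I would equate the coefficients of the surviving monomials and, using the explicit coefficients of $f_{h,t,s}$ and $f_{k,t,\ell}$ together with $h^{q^t+1}=k^{q^t+1}=-1$, solve for $a,b,d$ (one finds, e.g., $a\in\F_{q^2}$) and read off the stated relation between $\rho(h)$ and $k$, resp.\ $k^{-1}$, the admissible scalar being $\pm1$ in general and any $\lambda$ with $\lambda^{q^2+1}=1$ exactly when $t\equiv2\pmod4$, by the divisibility remark above; the adjoint branch is entirely parallel and yields items II and IV.

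The hard part will be this ``only if'' direction, and within it the step $c=0$, equivalently $\Lambda_s=\Lambda_\ell$: it rests on a finite but fiddly analysis of which coincidences modulo $2t$ can occur among the exponents of $\Lambda_\ell+\Lambda_s$ and of $\{0\}\cup\Lambda_s$, and it is exactly here that $t\ge5$ is used — for $t\in\{3,4\}$ new coincidences occur and additional equivalences (and new linear sets) can appear, which is precisely the gap the present paper is meant to close. A secondary, more bookkeeping-heavy difficulty is pinning down the scalar $\lambda$ and its dependence on $t\bmod4$; it is convenient to organise both this and the subsequent count of inequivalent codes around a preliminary determination of the automorphism group of $\mathcal{C}_{h,t,s}$ (equivalently, the stabiliser of $U_{h,t,s}$ in $\GaL(2,q^{2t})$), from which the equivalences in I--IV then appear as a small set of coset representatives.
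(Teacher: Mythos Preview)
The statement you target is Theorem~\ref{th:equivalencebetween_NSZ}, which in this paper is \emph{quoted} from \cite{neri_santonastaso_zullo} and not reproved; the paper's own work is the extension to $t\in\{3,4\}$ (Theorem~\ref{main_result} via Lemmas~\ref{lm:case_t_=3}--\ref{lm:case_t_=4}). So strictly speaking there is no proof here to compare against. It is nevertheless instructive to compare your plan with the machinery the paper builds for the small-$t$ cases, because they diverge exactly at the step you flag as hardest.

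You propose to expand $g(ax+cf^\rho(x))=bx+df^\rho(x)$ and argue combinatorially, from the exponent sets $\Lambda_\ell\cup(\Lambda_\ell+\Lambda_s)$ versus $\{0\}\cup\Lambda_s$, that $c=0$ when $t\ge5$. The paper avoids this entirely: since $I_R(\mathcal{C}_{h,t,s})\cong\F_{q^2}$, the polynomial $\psi_{h,t,s}$ (or, for $t=3$, the equivalent $H_{h,s}$) is in standard form, and Theorem~\ref{diagonal} then forces the $\GL$-matrix to be diagonal \emph{or antidiagonal} a priori, reducing everything to the two equations~\eqref{eq-standard}. This structural reduction is valid for every $t\ge3$, so it replaces your exponent bookkeeping by a one-line argument on the stabiliser; the residual work is then the coefficient analysis of both branches of~\eqref{eq-standard}, which is where the long computations of Lemmas~\ref{lm:case_t_=3}--\ref{lm:case_t_=4} sit.

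There is a genuine gap in your outline. You conclude $c=0$ and then invoke the adjoint $\widehat g$ to account for items~II and~IV. But in the framework used here (code equivalence of $\mathcal{C}_f$ equals $\GaL$-equivalence of $U_f$, see Section~2 and \cite{sheekey_2015}), the adjoint is not an extra branch of the equivalence relation; the antidiagonal possibility $a=d=0$, $bc\neq0$ is a case \emph{inside} the single $\GaL$-identity, not something captured by passing to $\widehat g$. And that case is not vacuous: in the paper's analysis for $t=4$, the relation $h^\rho=\pm k^{-1}$ in Case~2 of Lemma~\ref{lm:case_t_=4} arises precisely from the antidiagonal branch $\psi_{h,4,s}(b\,\psi_{k,4,-s}(x))=cx$, while the diagonal branch there leads to a contradiction. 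The same phenomenon persists for $t\ge5$, so your assertion ``forcing $c=0$'' cannot hold as stated. You must either (i) show that the exponent argument only forces the matrix to be diagonal \emph{or} antidiagonal and then analyse the second branch directly (this is essentially what Theorem~\ref{diagonal} gives for free), or (ii) explain carefully how the antidiagonal $\GaL$-equivalence is intertwined with a diagonal one for a polynomial in the same family---noting that the relevant operation is compositional inversion, not the adjoint.
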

Clearly the necessity in the statement of this theorem, still holds true if $t=3$ and $4$. However, as underlined in the initial part of \cite[Section $4$]{neri_santonastaso_zullo}, in such cases its sufficiency part, and more generally the equivalence issue for the codes $\cC_{h,t,s}$ requires more demanding computations and is left as open problems (see \cite[Section 5]{neri_santonastaso_zullo}). In this article we will face with these problems, extending Theorem \ref{th:equivalencebetween_NSZ}. Our main achievement, Theorem \ref{main_result},  is obtained by heavily exploiting some concepts and results stated in \cite{longobardi-zanella_2022}. For this reason, we dedicated Section $2$, to recall basic definitions and salient aspects of these tools. As a direct consequence of the results stated in Section $3$, same arguments as those used in \cite{longobardi_marino_trombetti_zhou} and in   \cite{neri_santonastaso_zullo}, lead to the determination of the number of inequivalent codes of type ${\cal C}_{h,t,s},$ for $t \in \{3,4\}$. Finally, in Section $4$, we give a direct proof of the fact that the maximum scattered linear set of $\PG(1,q^8)$ associated with $\cC_{h,4,s}$ is not equivalent to any one known so far.

\section{Preliminaries}

In \cite{Csajbok_Marino_Polverino_Zanella}, it is pointed out that linear MRD codes with minimum distance $d=n-1$ that are $\F_{q^n}$-subspaces of $\mathcal{\tilde{L}}_{n,q}[x]$ can be seen, up to equivalence, as a suitable $2$-dimensional $\F_{q^n}$-subspace of $\mathcal{\tilde{L}}_{n,q}[x]$ of the following shape 
\[{\cal C}_f=\langle x,f(x) \rangle_{\F_{q^n}},\] where $f(x) \in \tilde{{\mathcal L}}_{n,q}[x]$ has the property that for each $y\in \F_{q^n}^*$ and $\kappa \in \F_{q^n}$, if $f(\kappa y)=\kappa f(y),$ then necessarily $\kappa \in \F_q$. Hereafter, without any loss of generality, we will suppose that the coefficient of the $q^0$-degree term in $f(x)$ is zero (see \cite{Csabok_Marino_Polverino_Zhou}). In \cite{sheekey_2015} such a $q$-polynomial is called  {\it scattered polynomial} and the $\F_q$-linear $n$-dimensional vector subspace of $\F_{q^n}^2$ $$U_f=\{(x,f(x)) \,:\, x \in \F_{q^n}\},$$ is called {\it scattered subspace}. In the following we will denote by $L_f$ the set of points of the projective line $\PG(1,q^n) \cong \PG(\F_{q^n}^2,\F_{q^n}),$ defined by the non-zero vectors in $U_f$, i.e. we put  \[ L_f =\{\langle (x,f(x)) \rangle_{\F_{q^n}} \,:\, x \in \F^*_{q^n}\},\] and will refer to it as to the linear set associated with the code ${\cal C}_f$. 

In \cite{exceptionalpartially}, two scattered polynomials $f(x)$ and $g(x) \in \tilde{{\mathcal L}}_{n,q}[x],$ are said to be $\GaL$-{\it equivalent} ($\GL$-{\it equivalent}) if $U_f$ and $U_g$ are $\GaL$-equivalent ($\GL$-equivalent); i.e., if there exists a matrix $ \begin{pmatrix} a & b \\ c & d  \end{pmatrix} \in \GL(2, \mathbb{F}_{q^n})$, and for each $y \in \F_{q^n}$ there exists $z \in \F_{q^n}$ with the property that \[\begin{pmatrix} a & b \\ c & d  \end{pmatrix} \begin{pmatrix} y^{\sigma} \\ f(y)^{\sigma}  \end{pmatrix} = \begin{pmatrix} z \\ g(z)  \end{pmatrix},\] where $\sigma \in {\rm Aut}(\F_{q^n})$ ($\sigma =id,$ in the case where $U_f$ and $U_g$ are $\GL$-equivalent). 
Moreover, in \cite{sheekey_2015}, it is shown that  ${\cal C}_f$ and ${\cal C}_g$ are equivalent if and only if $f$ and $g$ are $\GaL$-equivalent. Furthermore, if ${\cal C}_f$ and ${\cal C}_g$ are equivalent then, $L_f$ and $L_g$ are $\PGaL$-equivalent.
Although, the converse of this statement is not true: for instance, let $f(x)= x^q$ and $g(x)=x^{q^s}$ with $(s,n)=1$ and $s \not \equiv \pm 1 \pmod n$, then $f(x)$ and $g(x)$ are not $\GaL$-equivalent, but obviously $L_f=\{\langle (1,x^{q-1}) \rangle_{\F_{q^n}} \,:\, x \in \F_{q^n}\setminus\{0\}\}=L_g$.\\ 
\noindent Let $s$ be an integer such that $(s,n)=1$ and let $\eta \in \F_{q^n}$ such that ${\rm N}_{q^n / q} (\eta)=\eta^{\frac{q^{n}-1}{q-1}} \not \in \{0, 1\}$. 

\medskip
\noindent Until 2018 there were only two classes of MRD codes of $\tilde{{\mathcal L}}_{n,q}[x],$ with minimum distance $n-1$. Precisely

\begin{equation}\label{laveauw}
 {\mathcal G}_{2,s}=\langle x, x^{q^s} \rangle_{\F_{q^n}},
\end{equation} and

\begin{equation}\label{sheekey}
{\mathcal H}_{2,s,\eta} =\langle x, \eta x^{q^{s}}+x^{q^{(n-1)s}}\rangle_{\F_{q^n}}.
\end{equation}

\medskip
\noindent These are particular cases of two important families, existing for each assigned values of the minimum distance. In fact, the latter belongs to the class of codes known as {\it generalized twisted Gabidulin codes}, which were constructed by John Sheekey, in \cite{sheekey_2015}, while the former lies in the celebrated class of Generalized Delsarte-Gabidulin codes, which appeared in \cite[Sec.VI]{Roth}  (see also \cite{delsarte_bilinear_1978}, \cite{gabidulin_MRD_1985} and \cite{kshevetskiy_gabidulin_2005}, to go through the genesis of this construction).   

In 2018, a new example of such type of codes was found, in the case when $n=6$ and $q$ is odd. Precisely

\[{\cal Z}_{6,\zeta} = \langle x, x^q + x^{q^3}+ \zeta x^{q^5}\rangle_{\F_{q^6}},\] where here $\zeta \in \F_{q^6}^*$ such that $\zeta^2+\zeta= 1;$ see \cite{Csabok_Marino_Zullo} and \cite{Marino_Montanucci_Zullo}.

In addition, when $n\in \{6,8\}$, another construction was described in \cite{Csajbok_Marino_Polverino_Zanella}:   

\[{\cal K}_{n,s,\delta}=\langle x,\delta x^{q^s}+x^{q^{s
+n/2}}\rangle_{\F_{q^n}},\] where $(s,n/2)=1$, $\delta \in \F_{q^n}$ such that $\N_{q^n/q^{n/2}}(\delta) \not \in \{0,1\}$, and for some other conditions on $\delta$ and $q$.

\medskip
\noindent In \cite{Bartoli_Zanella_Zullo}, the authors provided a new family for $q$ odd, ensuing from the scattered $q$-polynomial \[f(x)=  h^{q-1}x^{q}-h^{q^2-1}x^{q^2}+x^{q^4}+x^{q^{5}},\] where $h\in \F_{q^6}$ such that $\N_{q^6/q^3}(h)=-1$. More precisely, they showed that elements of this family are new except when $q$ is a power of $5$, and $h \in \F_q$, in which case the code is equivalent to ${\cal Z}_{6,\delta}$.

Also in the case $q$ odd, in \cite{longobardi_zanella}, another new family have been found, existing for each $n=2t$ with either $t\geq3$ odd and $q\equiv 1 \pmod 4$, or $t$ even. Precisely, the following
\begin{equation}\label{longobardi_zanella}
\mathcal{C}_t = \langle x,  x^q+x^{q^{t-1}} -x^{q^{t+1}}+x^{q^{2t-1}}\rangle_{\F_{q^n}},
\end{equation}

\medskip
\noindent which was later generalized in \cite{longobardi_marino_trombetti_zhou} to  \begin{equation}\label{longobardi_marino_trombetti_zhou}
	{\mathcal C}_{h,t} = \langle x,\psi_{h,t}(x) \rangle_{\F_{q^{n}}}
	\end{equation}
where $\psi_{h,t}(x)=  x^{q}+x^{q^{t-1}}-h^{1-q^{t+1}}x^{q^{t+1}}+h^{1-q^{2t-1}}x^{q^{2t-1}}$ and $h \in \F_{q^{2t}} \setminus \F_{q^t}$ such that $\mathrm{N}_{q^{2t}/q^t}(h)=-1$. 
Performing same computations as in \cite[Theorem 4.6]{longobardi_marino_trombetti_zhou}, it is possible to see that  when $t=3$ the code ${\mathcal C}_{h,3}$ is equivalent to the one introduced in \cite{Bartoli_Zanella_Zullo}. Class (\ref{longobardi_marino_trombetti_zhou}) was further extended 
into the following:
\begin{equation}\label{neri_santonastaso_zullo} {\mathcal C}_{h,t,s} = \langle x,\psi_{h,t,s}(x)\rangle_{\F_{q^{2t}}},
\end{equation} where
$\psi_{h,t,s}(x)=x^{q^s}+x^{q^{s(t-1)}}  + h^{1+q^s}x^{q^{s(t+1)}}+h^{1-q^{s(2t-1)}}x^{q^{s(2t-1)}} \in \mathcal{ \tilde{L}}_{2t,q}[x]$, $(2t,s)=1$ and $h\in \F_{q^{2t}}$ such that $\mathrm {N}_{q^{2t}/q^t}(h)=-1$.

\medskip
\noindent In order to extend Theorem 1.1 stated in the Introduction by removing the hypothesis $t \geq 5$, as a first step  we recap and adapt some concepts and tools which were introduced in \cite[Section 4]{longobardi-zanella_2022}, and that will be crucial to perform subsequent computations.

Following \cite{longobardi-zanella_2022}, we say that a $q$-polynomial $F(x)=\sum_{i=0}^{n-1}c_ix^{q^i}$ is in {\it standard form}, if the greatest common divisor $m_F$ of the set of integers \[\{(i-j)\pmod n \, \colon \, c_ic_j\neq0 \text{ with } \, i \neq j\}\cup\{n\},\] is strictly larger than $1$. If this is the case, then $F(x)$ has the following fashion:
\begin{equation}\label{eq:sf}
    F(x)=\sum_{j=0}^{n/m-1}c_jx^{q^{s+jm}},
\end{equation}
where $m=m_F$ and $0\le s<m_F$. We stress here that when $F(x)$ is a scattered polynomial, then $s$ must be coprime with $m_F$, otherwise $F(x)$ would be $\mathbb{F}_{q^r}$-linear for $r=(s,m_F)>1$, 
contradicting its scatteredness.

\bigskip 
Let $\cC \leq \tilde{\mathcal{L}}_{n,q}[x]$ be a linear rank-metric code. Then, its \textit{left idealizer} and
\textit{right idealizer} are defined as the sets
\begin{equation*}
    I_L(\cC) = \{\varphi(x) \in \mathcal{\tilde{L}}_{n,q}[x] : \varphi \circ f \in \cC \,\, \forall f \in \cC\},
\end{equation*} 
and
\begin{equation*}
I_R(\cC) = \{\varphi(x) \in \mathcal{\tilde{L}}_{n,q}[x] :  f \circ \varphi \in \cC \,\, \forall f \in \cC\},
\end{equation*}
respectively. By \cite[Corollary 5.6]{LuTrZh17}, if $\cC$ is a linear MRD code, $I_L(\cC)$ and $I_R(\cC)$ are subalgebras of $\mathcal{\tilde{L}}_{n,q}[x]$; more precisely, they are isomorphic
to  subfields of $\F_{q^n}$. In particular, if $\cC$ has left idealizer containing a subfield isomorphic to $\F_{q^n}$, then it is called $\F_{q^n}$-\textit{linear}. Note that any MRD code $\cC_f$ associated with a scattered polynomial $f(x)$ is $\F_{q^n}$-linear.
The idealizers are useful tools  for studying the equivalence issue in this context; indeed MRD codes with non-isomorphic left (resp. right) idealizer cannot be equivalent, see \cite[Section 4]{LuTrZh17}.\\
Now, let $f(x)$ be a  scattered   $q$-polynomial and consider $G_f$ the stabilizer in $\GL(2,q^n)$ of the scattered subspace $U_f$. By \cite[Lemma 4.1]{longobardi_marino_trombetti_zhou}, (see also  \cite[Proposition 3.1]{longobardi-zanella_2022}) $G_f^\circ=G_f \cup \{O\}$, where $O$ is the null matrix of order 2, is a subfield of matrices isomorphic to $I_R(\cC_f)$ as field. Then, one gets the following
\begin{theorem}\label{t:sf}
Let ${\cal C}_F=\langle x,F(x) \rangle_{\F_{q^n}} \subset \tilde{\mathcal{L}}_{n,q}[x]$ be a linear $2$-dimensional MRD code with minimum distance $n-1$.
The following statements are equivalent:
\begin{enumerate}
\item [$(i)$] $|G^\circ_F|=q^m$, $m > 1$, and all elements of $G^\circ_F$ are diagonal.
\item [$(ii)$] the $q$-polynomial $F(x)$ is in standard form.
\item [$(iii)$] the right idealizer $$I_R(\cC_F)=\{\alpha x \,:\, \alpha \in \F_{q^{m}}\}.$$

\end{enumerate}
Moreover, if the above conditions $(i)$, $(ii)$ and $(iii)$ hold, then $m = m_F$ and
\begin{equation}
    G^\circ_F=\left \{\begin{pmatrix}
    \alpha & 0 \\
     0 & \alpha^{q^s}   \end{pmatrix} \colon \alpha \in \F_{q^m} \right \},
\end{equation}
with $(s,m)=1$.
\end{theorem}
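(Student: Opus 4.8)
The plan is to work entirely through the field isomorphism $G_F^\circ\cong I_R(\cC_F)$ recalled just above the statement: it sends a matrix $\begin{pmatrix} a & b\\ c& d\end{pmatrix}\in G_F^\circ$ to the $q$-polynomial $ax+bF(x)\in I_R(\cC_F)$, and the membership condition $\begin{pmatrix} a & b\\ c& d\end{pmatrix}\in G_F$ reads $F(ax+bF(x))=cx+dF(x)$ for all $x\in\F_{q^n}$. I will use the paper's standing conventions that $F$ is scattered with zero constant term, so $F(x)=\sum_{i=0}^{n-1}a_ix^{q^i}$ with $a_0=0$ and $F$ is not $\F_{q^r}$-linear for any $r>1$, and I will prove the cycle $(ii)\Rightarrow(i)\Rightarrow(iii)\Rightarrow(ii)$, reading the \emph{moreover} part off the first implication.

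For $(ii)\Rightarrow(i)$ — the heart of the argument — write $F$ in standard form $F(x)=\sum_{j=0}^{n/m-1}c_jx^{q^{s+jm}}$ with $m=m_F>1$ and $(s,m)=1$. Since $\alpha^{q^{jm}}=\alpha$ for every $\alpha\in\F_{q^m}$, a one-line computation gives $F(\alpha x)=\alpha^{q^s}F(x)$, so $\mathrm{diag}(\alpha,\alpha^{q^s})\in G_F^\circ$ and hence $D:=\{\mathrm{diag}(\alpha,\alpha^{q^s})\colon\alpha\in\F_{q^m}\}$ is a subfield of $G_F^\circ$ of order $q^m$. As $G_F^\circ$ is a \emph{commutative} field, every $M\in G_F$ commutes with $D$; choosing $\alpha\in\F_{q^m}\setminus\F_q$ (possible since $m>1$), the two diagonal entries $\alpha$ and $\alpha^{q^s}$ are distinct, because $(s,m)=1$ means $\mathrm{Frob}^s$ fixes only $\F_q$ inside $\F_{q^m}$. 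A matrix commuting with a diagonal matrix having distinct diagonal entries is itself diagonal, so $M$ is diagonal; thus all of $G_F^\circ$ is diagonal. For such an element, comparing the coefficient of $x^{q^i}$ in $F(ax)=dF(x)$ yields $a^{q^i}=d$ for every $i$ with $a_i\neq0$, and taking two such indices forces $a\in\F_{q^{m_F}}=\F_{q^m}$ with $d=a^{q^s}$. Hence $G_F^\circ=D$, $|G_F^\circ|=q^m$ with $m=m_F>1$, which is $(i)$, and along the way we have recorded the explicit shape of $G_F^\circ$ and that $(s,m)=1$.

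For $(i)\Rightarrow(iii)$: on the field $G_F^\circ$ the assignment $\mathrm{diag}(a,d)\mapsto a$ is a ring homomorphism, and it is injective, since a nonzero element of $G_F^\circ=G_F\cup\{O\}$ is invertible whereas $\mathrm{diag}(0,d-d')$ is not; so its image is a subfield of $\F_{q^n}$ of cardinality $q^m$, namely $\F_{q^m}$, and under $G_F^\circ\cong I_R(\cC_F)$ this image is precisely $I_R(\cC_F)=\{\alpha x\colon\alpha\in\F_{q^m}\}$. For $(iii)\Rightarrow(ii)$: each element of $I_R(\cC_F)$ has the form $\alpha x$, hence corresponds to a matrix $\begin{pmatrix}\alpha&0\\c(\alpha)&d(\alpha)\end{pmatrix}\in G_F^\circ$, i.e.\ $F(\alpha x)=c(\alpha)x+d(\alpha)F(x)$; the coefficient of $x^{q^0}$ gives $c(\alpha)=0$ (using $a_0=0$), and the coefficient of $x^{q^i}$ with $i\neq0$ gives $\alpha^{q^i}=d(\alpha)$ whenever $a_i\neq0$. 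Two such indices are therefore congruent modulo $m$, so every exponent occurring in $F$ lies in a single residue class mod $m$; hence $m\mid m_F$ and, as $m>1$, the polynomial $F$ is in standard form, i.e.\ $(ii)$. Once the three statements are known equivalent, $F$ is necessarily in standard form, and applying the $(ii)\Rightarrow(i)$ computation gives $m=m_F$ together with the displayed form of $G_F^\circ$ and $(s,m)=1$, which is the \emph{moreover} assertion.

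I expect the only genuinely delicate point to be forcing $G_F$ to consist of \emph{diagonal} matrices, rather than merely matrices simultaneously conjugate to diagonal ones. This is exactly where commutativity of $G_F^\circ\cong I_R(\cC_F)$ is indispensable (it powers the centralizer argument) and where scatteredness is used (to exclude the $\F_{q^r}$-linear degenerations). Should one prefer not to invoke commutativity, the alternative is a direct residue-mod-$m$ bookkeeping in $F(ax+bF(x))=cx+dF(x)$ combined with $\dim_{\F_q}\mathrm{im}(F)\geq n-1$; that route works too but is heavier, and in the borderline case $m=2$ it additionally requires the observation that $F\circ F$ cannot equal the map $x\mapsto\lambda x$ for any $\lambda\in\F_{q^n}^{*}$, lest $F$ contradict scatteredness.
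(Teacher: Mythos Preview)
Your proof is correct, and it takes a genuinely different route from the paper's. The paper does not argue $(i)\Leftrightarrow(ii)$ at all: it simply invokes \cite[Theorem~4.3]{longobardi-zanella_2022} for that equivalence, then proves $(ii)\Rightarrow(iii)$ by noting $\{\alpha x:\alpha\in\F_{q^m}\}\subseteq I_R(\cC_F)$ and matching cardinalities, and $(iii)\Rightarrow(i)$ by picking a (primitive) generator $\omega$ of $\F_{q^m}$, reading off from $F(\omega x)=dF(x)$ that $d=\omega^{q^i}$, and observing that the resulting diagonal matrix has order $q^m-1=|G_F|$ and therefore generates $G_F$. Your cycle $(ii)\Rightarrow(i)\Rightarrow(iii)\Rightarrow(ii)$ replaces the external citation by a direct argument whose core is the centralizer step: since $G_F^\circ$ is a field, hence commutative, every element must commute with the diagonal matrix $\mathrm{diag}(\alpha,\alpha^{q^s})$ with $\alpha\neq\alpha^{q^s}$, and is therefore diagonal itself. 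Your $(iii)\Rightarrow(ii)$ is essentially the coefficient comparison embedded in the paper's $(iii)\Rightarrow(i)$, and your $(i)\Rightarrow(iii)$ uses the explicit isomorphism $\begin{pmatrix}a&b\\c&d\end{pmatrix}\mapsto ax+bF(x)$ (which you correctly identify, though the paper only records that \emph{some} isomorphism exists) to transport the projection onto $\F_{q^m}$. The trade-off is clear: the paper's version is shorter but leans on \cite{longobardi-zanella_2022}, while yours is self-contained and makes transparent exactly where commutativity of $G_F^\circ$ and the scatteredness condition $(s,m)=1$ enter.
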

\begin{proof}
By \cite[Theorem 4.3.]{longobardi-zanella_2022}, $(i)$ is equivalent to $(ii)$.\\
$(ii) \Rightarrow (iii)$. Since $F(x)$ is  as in Formula \eqref{eq:sf}, the set
\[\{ \alpha x : \alpha \in \F_{q^m}\}\]
is contained in $I_R(\cC_F)$. Taking into account $|G^\circ_F|=|I_R(\cC_F)|=q^m$, $(iii)$ follows.\\
$(iii) \Rightarrow (i)$. Let $\omega \in \F_{q^m}$ such that $\F_q(\omega)=\F_{q^m}$. Since $\omega x \in I_R(\cC_F)$, then there exists $d \in \F_{q^n}$ such that
\begin{equation}\label{df(x)}
   F(\omega x)=dF(x). 
\end{equation}
This implies that if $F(x) = \sum^{n-1}_{i=1}c_ix^{q^i}$, then \[c_i\omega^{q^i}= dc_i\]
for any $i =1, \ldots, n -1$.
So, if $c_i \not = 0$, $d \in \F_{q^m}$ and $d=\omega^h$ for some $h \in \mathbb{N}$.
By \eqref{df(x)}, the matrix
\begin{equation*}
\begin{pmatrix}
\omega & 0 \\
0 & \omega^h
\end{pmatrix}
\end{equation*}
 belongs to $G_F$, has order $q^m-1$ and, since $|G_F|=|I_R(\cC_F) \setminus \{0\}|=q^m-1$, it generates $G_F$. This implies $(i)$.
\end{proof}

As a direct consequence of \cite[Corollary 4.7] {longobardi-zanella_2022}, one has that if the MRD code ${\cal C}_f =\langle x,f(x) \rangle_{\F_{q^n}}$ has right idealizer $I_R({\cal C}_f)$ not isomorphic to $\F_q$ then, ${\cal C}_f$ is equivalent to an MRD code ${\cal C}_F$, where $F$ is a $q$-polynomial shaped as in (\ref{eq:sf}). In addition, as a consequence of what stated in \cite[Section 4]{longobardi-zanella_2022}, we can prove the following result concerning with the equivalence issue for codes in the form ${\cal C}_F =\langle x,F(x) \rangle_{\F_{q^n}}$ where $F(x)$ is a $q$-polynomial in standard form. Precisely,

\begin{theorem}\label{diagonal}
Let ${\cal C}_{F_i}$, $i=1,2$ be two $2$-dimensional MRD codes where $F_i$,  $i=1,2$, are scattered polynomials having the form described in (\ref{eq:sf}). Then, they are equivalent if and only if there exist $a,b,c,d \in \F_{q^n}^*$ such that
\begin{equation}\label{eq-standard}
    dF_2(x)=F_1^\rho(ax) \quad \textnormal{or} \quad F_1^\rho(bF_2(x))=cx 
\end{equation}
for some $\rho \in \Aut(\F_{q^n})$.
In particular, 
\begin{itemize}
    \item [$(i)$] $\cC_{F_2}=\cC^\rho_{F_1} \circ ax$ or,
    \item [ $(ii)$] $\cC_{F_2}=\cC^\rho_{F_1} \circ b F_2(x)$.
\end{itemize}
\end{theorem}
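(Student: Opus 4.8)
The plan is to translate the abstract equivalence of codes stated in the Introduction into the concrete ``diagonal'' form offered by Theorem~\ref{t:sf}. First I would recall that both $\cC_{F_1}$ and $\cC_{F_2}$ are $\F_{q^n}$-linear MRD codes of minimum distance $n-1$ with $F_i$ in standard form, so by Theorem~\ref{t:sf} the stabilizer group $G^\circ_{F_i}$ of each scattered subspace $U_{F_i}$ consists only of diagonal matrices, and equals $\left\{\begin{pmatrix}\alpha & 0\\ 0 & \alpha^{q^{s_i}}\end{pmatrix}:\alpha\in\F_{q^m}\right\}$ for the appropriate $m=m_{F_i}$ and $(s_i,m)=1$. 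Since equivalent codes have isomorphic right idealizers, equivalence forces $m_{F_1}=m_{F_2}=:m$. This sets up the essential rigidity: any equivalence between the two codes must conjugate one diagonal torus to the other.

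Next I would unwind the definition of equivalence of the codes. By the result of Sheekey recalled in Section~2, $\cC_{F_1}$ and $\cC_{F_2}$ are equivalent if and only if $F_1$ and $F_2$ are $\GaL$-equivalent, i.e.\ there is a matrix $\begin{pmatrix}a&b\\c&d\end{pmatrix}\in\GL(2,q^n)$ and $\rho\in\Aut(\F_{q^n})$ with $\begin{pmatrix}a&b\\c&d\end{pmatrix}\begin{pmatrix}y^\rho\\ F_1(y)^\rho\end{pmatrix}=\begin{pmatrix}z\\ F_2(z)\end{pmatrix}$ for all $y$, where $z=z(y)=ay^\rho+bF_1(y)^\rho$. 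The key structural observation is that conjugating $U_{F_1}$ by $\begin{pmatrix}a&b\\c&d\end{pmatrix}$ carries $G^\circ_{F_1}$ onto $G^\circ_{F_2}$; since both of these torus subgroups are \emph{diagonal}, the conjugating matrix must normalize the group of diagonal matrices of $\GL(2,q^n)$, hence it is either diagonal or antidiagonal. (Here one uses that the diagonal torus in question is large enough—order $q^m-1>1$—so its normalizer in $\GL(2,q^n)$ is the monomial subgroup; an antidiagonal $\rho$-semilinear map could still occur.) The diagonal case $b=c=0$ yields $F_1^\rho(ay^\rho\!\cdot\!\text{something})$—more precisely $dF_1(y)^\rho = F_2(ay^\rho)$, which after absorbing $\rho$ gives $dF_2(x)=F_1^\rho(ax)$, i.e.\ case~$(i)$. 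The antidiagonal case $a=d=0$ gives $by^\rho\cdot(\cdots)$ mapped to the second coordinate and $cF_1(y)^\rho$ to the first, which rearranges to $F_1^\rho(bF_2(x))=cx$, i.e.\ case~$(ii)$. Conversely, each of the two relations in \eqref{eq-standard} visibly produces a monomial matrix realizing the $\GaL$-equivalence, hence an equivalence of codes; writing this matrix out explicitly yields the ``in particular'' statements $\cC_{F_2}=\cC^\rho_{F_1}\circ ax$ and $\cC_{F_2}=\cC^\rho_{F_1}\circ bF_2(x)$, using that $\cC_{F_i}=\langle x, F_i(x)\rangle$ so left- or right-composition by a permutation linearized polynomial stays inside the span once the defining relation holds.

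The main obstacle I anticipate is justifying rigorously that the conjugating matrix must be monomial (diagonal or antidiagonal). This is where the hypothesis that $F_i$ is in standard form—equivalently, $m_{F_i}>1$—is indispensable: it guarantees $|G^\circ_{F_i}|=q^m>q$, so the torus $G_{F_i}$ is a \emph{nontrivial} diagonal subgroup whose normalizer in $\GL(2,q^n)$ is exactly the monomial group. One has to be slightly careful about the interaction with the field automorphism $\rho$: the correct statement is that the pair $(M,\rho)$ with $M=\begin{pmatrix}a&b\\c&d\end{pmatrix}$ induces an isomorphism $G^\circ_{F_1}\to G^\circ_{F_2}$ by $A\mapsto M A^\rho M^{-1}$, and since $A^\rho$ is still diagonal whenever $A$ is, the conclusion that $M$ is monomial goes through unchanged. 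A secondary, purely bookkeeping point is to check that in case~$(ii)$ the polynomial $F_2$ is itself a permutation polynomial (true since $U_{F_2}$ is scattered, so $F_2$ is injective on $\F_{q^n}$), so that ``$\circ\,bF_2(x)$'' is a legitimate equivalence operation. Everything else is a direct translation between the matrix identity and the functional identities in \eqref{eq-standard}, and I would present it compactly rather than expanding the elementary algebra.
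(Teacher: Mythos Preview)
Your approach is essentially the paper's: reduce code equivalence to $\GaL$-equivalence of the $F_i$ via Sheekey, observe that the conjugating matrix $A$ satisfies $A^{-1}G_{F_1^\rho}A=G_{F_2}$, use Theorem~\ref{t:sf} (both stabilizers diagonal) to force $A$ monomial, and then read off the two cases and the ``in particular'' identities. One small correction to your secondary remark: scatteredness of $F_2$ does \emph{not} imply $F_2$ is injective (it only gives $\dim_{\F_q}\ker F_2\le 1$); the reason $bF_2$ is a permutation in case~$(ii)$ is simply that $F_1^\rho\circ(bF_2)=cx$ with $c\neq 0$ is a bijection, hence so is each $\F_q$-linear factor.
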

\begin{proof}
By \cite[Theorem 8]{sheekey_2015}, $\cC_{F_1}$ and $\cC_{F_2}$ are equivalent codes if and only if $F_1(x)$ and $F_2(x)$ are $\GaL$-equivalent, or equivalently, $F_1^{\rho}(x)$ and $F_2(x)$ are $\GL$-equivalent for some $\rho \in \Aut(\F_{q^n})$. Then, there exists a matrix  $A=\begin{pmatrix} a & b \\ c & d  \end{pmatrix} \in \GL(2,q^n)$ such that for any $y \in \F_{q^n}$, there is $z \in \F_{q^n}$ such that
\begin{equation}\label{F-F'}
    A\begin{pmatrix} y \\ F_2(y) \end{pmatrix} = \begin{pmatrix} z \\ F_1^\rho(z) \end{pmatrix}.
\end{equation}  Let $G_{F_1^\rho}$ and $G_{F_2}$ be the stabilizer groups in $\GL(2,q^n)$ of the subspaces $U_{F_1^\rho}$ and $U_{F_2}$, respectively.
It is straightforward to see that 
\begin{equation}\label{eq:conjugation}
    A^{-1}G_{F_1^\rho}A=G_{F_2}.
\end{equation}

By Theorem \ref{t:sf}, $G_{F_1^\rho}$ and $G_{F_2}$ are groups made up of diagonal matrices. Therefore, by Formula \eqref{eq:conjugation} one gets that $a=d=0$ or $b=c=0$.\\
If $a=d=0$, $b,c \in \F_{q^n}^*$ and  $c x=F_1^\rho(b F_2(x))$. So, one gets the second formula in \eqref{eq-standard}.\\
If $b=c=0$, $a,d \in \F_{q^n}^*$ and, by $\eqref{F-F'}$, $dF_2(x)=F_1^\rho(ax)$.\\
Now, if $dF_2(x)=F_1^\rho(ax)$,  
\begin{equation}
\begin{split}
    \cC_{F_2}&=\langle x , F_2(x) \rangle_{\F_{q^n}}=\langle x, dF_2(x) \rangle_{\F_{q^n}} =\langle x, F_1^\rho(ax) \rangle_{\F_{q^n}}\\
    &= \langle x, F_1^\rho(x) \rangle_{\F_{q^n}}\circ ax= \cC^\rho_{F_1} \circ ax.
    \end{split}
\end{equation}
If $F_1^\rho(bF_2(x))=cx$, then
\begin{equation}
\begin{split}
    \cC_{F_2}&=\langle x, F_2(x) \rangle_{\F_{q^n}}=\langle cx, F_2(x) \rangle_{\F_{q^n}}=\langle F_1^\rho(bF_2(x)), F_2(x) \rangle_{\F_{q^n}}\\
    &=\langle x, F_1^\rho(x) \rangle_{\F_{q^n}} \circ b F_2(x)= \cC^\rho_{F_1} \circ bF_2(x).
    \end{split}
\end{equation}
\end{proof}

Also, as consequence of \cite{longobardi-zanella_2022} and the Theorem above, we have the following

\begin{corollary}\label{eq-Snq}
Let $\cC_{f_i}$ be MRD codes such that  $I_R({\cal C}_{f_i}),$ are not isomorphic to $\F_q$,  $i=1,2$. Then, $\cC_{f_1}$ and $\cC_{f_2}$ are equivalent if and only if there exist two polynomials $F_i$ in standard forms, such that ${\cal C}_{f_i}$ is equivalent to $\cC_{F_i}$,  $i=1,2$, and \eqref{eq-standard} holds true.
\end{corollary}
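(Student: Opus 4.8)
The plan is to deduce Corollary \ref{eq-Snq} by combining Theorem \ref{diagonal} with the reduction result from \cite{longobardi-zanella_2022} that is quoted just before the statement. First I would establish the forward implication. Suppose $\cC_{f_1}$ and $\cC_{f_2}$ are equivalent. Since each $I_R(\cC_{f_i})$ is not isomorphic to $\F_q$, the cited consequence of \cite[Corollary 4.7]{longobardi-zanella_2022} guarantees the existence of $q$-polynomials $F_i$ in standard form (i.e.\ of the shape \eqref{eq:sf}) with $\cC_{f_i}$ equivalent to $\cC_{F_i}$ for $i=1,2$. Equivalence of codes is transitive, so from $\cC_{F_1}\sim\cC_{f_1}\sim\cC_{f_2}\sim\cC_{F_2}$ we get $\cC_{F_1}\sim\cC_{F_2}$; now Theorem \ref{diagonal} applies to the pair $\cC_{F_1},\cC_{F_2}$ (both $F_i$ being scattered and in the form \eqref{eq:sf}, since scatteredness is preserved under equivalence) and yields $a,b,c,d\in\F_{q^n}^*$ and $\rho\in\Aut(\F_{q^n})$ satisfying \eqref{eq-standard}. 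That is exactly the asserted conclusion.

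For the converse, assume there exist standard-form polynomials $F_i$ with $\cC_{f_i}\sim\cC_{F_i}$ for $i=1,2$ and with \eqref{eq-standard} holding for the pair $F_1,F_2$. By Theorem \ref{diagonal} (its "if" direction, which is the content of the displayed chains of equalities (i)/(ii) in that proof), condition \eqref{eq-standard} forces $\cC_{F_1}\sim\cC_{F_2}$. Chaining the three equivalences $\cC_{f_1}\sim\cC_{F_1}\sim\cC_{F_2}\sim\cC_{f_2}$ and using transitivity again gives $\cC_{f_1}\sim\cC_{f_2}$, as required.

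The only genuine subtlety, and the step I would be most careful about, is the passage from "$I_R(\cC_{f_i})$ not isomorphic to $\F_q$" to "$\cC_{f_i}$ equivalent to some $\cC_{F_i}$ with $F_i$ in standard form": this is precisely where \cite[Corollary 4.7]{longobardi-zanella_2022} is invoked, and one must check that the reduction it provides indeed lands in the class of scattered polynomials shaped as in \eqref{eq:sf}, so that the hypotheses of Theorem \ref{diagonal} are met. One should also note that the right idealizer is an invariant of the equivalence class (as recalled in the discussion of idealizers above, since codes with non-isomorphic right idealizers cannot be equivalent), so the hypothesis on $I_R$ is compatible with replacing $\cC_{f_i}$ by the equivalent $\cC_{F_i}$; in fact by Theorem \ref{t:sf}, $I_R(\cC_{F_i})\cong\F_{q^{m_{F_i}}}$ with $m_{F_i}>1$, consistent with $I_R(\cC_{f_i})\not\cong\F_q$. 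Everything else is bookkeeping with the transitivity and symmetry of code equivalence, which I would not spell out beyond the chains above.
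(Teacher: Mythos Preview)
Your proposal is correct and matches the paper's intended argument exactly: the paper does not even spell out a proof, stating the corollary simply as a consequence of \cite{longobardi-zanella_2022} and Theorem~\ref{diagonal}, and what you have written is precisely the routine transitivity argument that fills this in. The care you take in checking that the standard-form reduction lands among scattered polynomials (so that Theorem~\ref{diagonal} applies) and that the hypothesis on $I_R$ is equivalence-invariant is appropriate, though neither point is genuinely at risk here.
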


 By \cite[Proposition 3.4, Remark 3.5 and Remark 4.4]{longobardi-zanella_2022}, one can see that for each $t \geq 3$ the right idealizer of ${\mathcal C}_{h,t,s}$ is isomorphic to $\F_{q^2}.$ In addition, any code ${\mathcal C}_{h,t,s} = \langle x,\psi_{h,t,s}(x)\rangle_{\F_{q^{2t}}}$ with $t \geq 4$ and even, is already expressed by means of a polynomial in standard form. On the other hand, by \cite[Example 4.11]{longobardi-zanella_2022}, one gets that the code $\cC_{H}$ where 
\begin{equation}
    H(x):=H_{h,s}(x)=(1-h^{1+q^{2s}})x^{q^s}+(h+h^2)x^{q^{3s}}+h^{1+q^{2s}}(h+h^{q^s})x^{q^{5s}} \in \mathcal{\tilde{L}}_{6,q}[x],
\end{equation}
 is MRD, equivalent to ${\mathcal C}_{h,3,s}$ and $H(x)$ is in standard form.

\section{On the equivalence issue for ${\mathcal C}_{h,t,s}$, with $t\in \{3,4\}$}

In the following, we assume $n=2t$ and restrict ourselves to the case $t\in\{3,4\}$ and start by studying the equivalence between two different elements in ${\cal C}_{h,t,s}$. Firstly, we prove the following two technical Lemmas.

\begin{lemma}\label{lm:case_t_=3}
Assume $n=6$ and that $\mathcal{C}_{h,3,s}$ and \,$\mathcal{C}_{k,3,\ell}$, are equivalent. One gets the following:
\begin{itemize}
\item[\textnormal{1.}] if $\ell \equiv s \pmod 6$, then $h^{\rho}=\pm k,$
\item  [\textnormal{2.}] f $\ell \equiv -s \pmod 6$, then $h^{\rho}=\pm k^{-1},$
\end{itemize}where $\rho \in \Aut(\F_{q^6})$.
\end{lemma}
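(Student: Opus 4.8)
The plan is to apply the machinery of Corollary \ref{eq-Snq} and Theorem \ref{diagonal} to the codes $\mathcal{C}_{h,3,s}$ and $\mathcal{C}_{k,3,\ell}$. Since for $t=3$ the right idealizer of $\mathcal{C}_{h,3,s}$ is isomorphic to $\F_{q^2}$ (hence not isomorphic to $\F_q$), by \cite[Example 4.11]{longobardi-zanella_2022} the code $\mathcal{C}_{h,3,s}$ is equivalent to $\mathcal{C}_{H}$ where $H(x)=H_{h,s}(x)$ is in standard form, and likewise $\mathcal{C}_{k,3,\ell}$ is equivalent to $\mathcal{C}_{K}$ with $K(x)=H_{k,\ell}(x)$. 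So the equivalence of the two given codes is, by Corollary \ref{eq-Snq} and Theorem \ref{diagonal}, equivalent to the existence of $a,b,c,d \in \F_{q^6}^*$ and $\rho \in \Aut(\F_{q^6})$ such that either $dK(x)=H^\rho(ax)$ or $H^\rho(bK(x))=cx$.

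The first case to treat is $dK(x)=H^\rho(ax)$. Writing out both sides as $q$-polynomials of the shape $c_0 x^{q^{s'}}+c_1 x^{q^{3s'}}+c_2 x^{q^{5s'}}$ and comparing the supports of the exponents, one first forces $\ell\equiv s\pmod 6$ (this is the ``necessity'' part that the excerpt notes still holds for $t=3$): indeed the exponent set $\{s',3s',5s'\}$ modulo $6$, for $s'$ coprime to $6$, is preserved only when the exponents match up. Once $\ell\equiv s\pmod 6$, matching the three coefficients gives three equations in $a,d,\rho$:
\begin{align*}
d(1-k^{1+q^{2\ell}}) &= (1-h^{1+q^{2s}})^{\rho}a^{q^{s}},\\
d(k+k^{2}) &= (h+h^{2})^{\rho}a^{q^{3s}},\\
d\,k^{1+q^{2\ell}}(k+k^{q^{\ell}}) &= \big(h^{1+q^{2s}}(h+h^{q^{s}})\big)^{\rho}a^{q^{5s}}.
\end{align*}
Since $a^{q^s}=a^{q^{3s}}=a^{q^{5s}}$ would be false in general, I must track the Frobenius powers of $a$ carefully; taking suitable ratios of these equations eliminates $d$ and $a$ (using that $a^{q^{3s}}/a^{q^s}=(a^{q^{2s}-1})^{q^s}$ etc.) and yields a relation purely between $h^\rho$ and $k$ together with their norms. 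Recalling $\N_{q^6/q^3}(h)=\N_{q^6/q^3}(k)=-1$, i.e. $h^{1+q^3}=k^{1+q^3}=-1$, I expect the ratios to collapse to an equation of the form $(h^\rho)^{\,\epsilon}= k$ (up to the sign ambiguity inherent in $q$-th power roots), leading to $h^\rho=\pm k$, which is conclusion 1.

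The second case $H^\rho(bK(x))=cx$ is handled by an analogous exponent-support comparison; here composing two $q$-polynomials of degree pattern $\{1,3,5\}$ (as multiples of $s$) and demanding the result be the monomial $cx$ forces the ``inverting'' relation on exponents, namely $\ell\equiv -s\pmod 6$, and then the coefficient comparison — again using $h^{1+q^3}=k^{1+q^3}=-1$ to simplify — gives $h^\rho=\pm k^{-1}$, which is conclusion 2. The main obstacle, and where the real work lies, is the second step of each case: the coefficient-matching systems are genuinely messy because of the three distinct Frobenius twists acting on $a$ (resp. $b$), and one must be disciplined about raising equations to appropriate $q$-powers and exploiting the norm condition $\N_{q^6/q^3}(h)=-1$ to force cancellation. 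I would organize this by first dividing through to normalize one coefficient, then isolating $a^{q^{2s}-1}$ (resp. the analogous quantity for $b$) from one equation and substituting into the others, reducing everything to a single equation in $h^\rho$ and $k$ whose only solutions are $\pm k$ (resp. $\pm k^{-1}$). It is also worth checking that the third equation is not independent — it should follow from the first two together with the norm relations — which provides a useful consistency check on the computation rather than an extra constraint.
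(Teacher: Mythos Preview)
Your overall setup is right --- reducing via Corollary \ref{eq-Snq} to the standard forms $H_{h,s}$ and $H_{k,\ell}$, and then splitting into the diagonal case $dK=H^\rho(a\,\cdot)$ and the anti-diagonal case $H^\rho(bK(\cdot))=cx$ --- but the way you match these two cases to the two parts of the Lemma is wrong, and this is a genuine gap.

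You claim that ``comparing the supports of the exponents'' in the diagonal case forces $\ell\equiv s\pmod 6$, and that the anti-diagonal case forces $\ell\equiv -s\pmod 6$. Neither is true. For any $s$ with $(s,6)=1$ the exponent set $\{s,3s,5s\}\pmod 6$ is just $\{1,3,5\}$, and the same holds for $\ell$; so the supports of $H_{h,s}(ax)$ and $H_{k,\ell}(x)$ \emph{always} coincide as sets, regardless of whether $\ell\equiv s$ or $\ell\equiv -s$. Likewise, in the composition $H^\rho(bK(x))$ the exponents lie in $\{is+j\ell:i,j\in\{1,3,5\}\}\pmod 6$, which equals $\{0,2s,4s\}$ for \emph{both} $\ell\equiv s$ and $\ell\equiv -s$; in particular, the monomial $cx$ is potentially attainable in either case. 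Consequently each of parts 1 and 2 of the Lemma requires analyzing \emph{both} the diagonal and the anti-diagonal equation --- four sub-cases in total, not two.

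This matters for the conclusions as well. In the paper's argument, for $\ell\equiv s$ the diagonal sub-case yields exactly $h^\rho=k$ (not $\pm k$), while the anti-diagonal sub-case yields $h^\rho=-k$; only together do they give $h^\rho=\pm k$. The situation for $\ell\equiv -s$ is analogous: the diagonal sub-case gives $h^\rho=k^{-1}$ and the anti-diagonal sub-case gives $h^\rho=-k^{-1}$. So your proposed pairing would miss half the outcomes. Finally, your expectation that ``the third equation is not independent'' and that the ratios ``collapse'' easily is optimistic: in the paper the three ratio equations are all used, an auxiliary quantity $\xi$ is introduced, relations like $\xi^{q^{2s}-q^s+1}=1$ are derived, and a nontrivial algebraic reduction (using $h^{1+q^{3s}}=k^{1+q^{3s}}=-1$ repeatedly) is needed before the answer drops out.
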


\begin{proof}
Assume that $\mathcal{C}_{h,3,s}=\langle x,\psi_{h,3,s}(x) \rangle_{\F_{q^6}}$ and\, $\mathcal{C}_{k,3,\ell}=\langle x,\psi_{k,\ell,s}(x) \rangle_{\F_{q^6}},$ are equivalent. By Corollary \ref{eq-Snq}, this may happen if and only if Formula \eqref{eq-standard} holds true for any pair of standard forms of the $q$-polynomials $\psi_{h,3,s}$ and $\psi_{k,3,\ell}$, respectively. As we indicated in the last part of the previous section, in \cite{longobardi-zanella_2022}, it was showed that $\psi_{h,3,s}(x)$ is equivalent to the following polynomial in standard form:
\[H_{h,s}(x)=(1-h^{1+q^{2s}})x^{q^s}+(h+h^2)x^{q^{3s}}+h^{1+q^{2s}}(h+h^{q^s})x^{q^{5s}}.\] 

Following Corollary \ref{eq-Snq}, to investigate this equivalence issue we may consider the following two equations 
\begin{equation}\label{eq:equiv_t=3}
		dH_{k,\ell}(x)=H^{\rho}_{h,s}(ax), \quad \textnormal{and} \quad H_{h,s}^\rho\left( b H_{k,\ell}(x)\right)=cx
\end{equation}
for $a,b,c,d \in \F_{q^6}^*$ and $\rho \in \Aut(\F_{q^6})$. 
Since the automorphism $\rho$ acts on $h$, without loss of generality, we may suppose that it is the identity, see \cite[Remark 4.5]{neri_santonastaso_zullo}. 
\medskip

\noindent \textbf{Case 1.} Suppose $s \equiv \ell \pmod 6$. If 
$dH_{k,\ell}(x)=H_{h,s}(ax)$, by expanding this equation we get the following conditions
\begin{equation}\label{a-d-systems=l}
    \begin{cases}
    d(1-k^{1+q^{2s}})=a^{q^s}(1-h^{1+q^{2s}})\\
    d(k+k^2)=a^{q^{3s}}(h+h^2)\\
    dk^{1+q^{2s}}(k+k^{q^s})=a^{q^{5s}}h^{1+q^{2s}}(h+h^{q^s}).
    \end{cases}
\end{equation}
We focus on getting a relation between $h$ and $k$ and eventually the values of $a$ and $d$. In order to do this we start by considering the ratio between second and first equation. Doing so we get
\begin{equation}\label{eq:case_1_1}
a^{q^{3s}-q^s}=\frac{k(k+1)(1-h^{1+q^{2s}})}{h(h+1)(1-k^{1+q^{2s}})}.
\end{equation}

Similarly computing the ratio between third and second equation and between first and third in \eqref{a-d-systems=l} we get
\begin{equation}\label{eq:case_1_2}
\begin{split}
&a^{q^{5s}-q^{3s}}=\frac{k^{q^{2s}}(h+1)(k+k^{q^s})}{h^{q^{2s}}(k+1)(h+h^{q^s})}\\
    &a^{q^s-q^{5s}}=\frac{(1-k^{1+q^{2s}})(h+h^{q^s})h^{1+q^{2s}}}{(1-h^{1+q^{2s}})(k+k^{q^s})k^{1+q^{2s}}}.
    \end{split}
    \end{equation}
By rising Equation (\ref{eq:case_1_1}) and the two Equations in (\ref{eq:case_1_2}) to the $q^{2s}$-power, and taking into account that $h^{q^{3s}+1}=k^{q^{3s}+1}=-1$, we get the following system of conditions on $h$ and $k$:

\begin{equation}
    \begin{cases}
    \left(\frac{k}{h}\right)^{q^{5s}}= \left [\frac{(k^{q^{4s}}+k^{q^{5s}})(h+1)}{(h^{q^{4s}}+h^{q^{5s}})(k+1)} \right] ^{q^{2s}+1}\\
   \left( \frac{k}{h} \right )^{q^{2s}+q^{4s}}=\frac{(h+h^{q^{s}})(k^{q^{2s}}+1)}{(k+k^{q^{s}})(h^{q^{2s}}+1)}\\
    \left( \frac{k}{h} \right )^{q^{4s}}=\frac{(h+1)(k^{q^{5s}}+k^{q^{4s}})}{(k+1)(h^{q^{5s}}+h^{q^{4s}})}.
    \end{cases}
\end{equation}
Putting $\xi=\frac{(k^{q^{4s}}+k^{q^{5s}})(h+1)}{(h^{q^{4s}}+h^{q^{5s}})(k+1)}$, then the system above reads

\begin{equation*}
      \begin{cases}
    \left(\frac{k}{h}\right)^{q^{5s}}= \xi ^{q^{2s}+1}\\
   \left( \frac{k}{h} \right )^{q^{2s}+q^{4s}}=\frac{1}{
\xi^{q^{2s}}}\\
    \left( \frac{k}{h} \right )^{q^{4s}}=\xi.
    \end{cases}
\end{equation*}
Since $\xi^{1+q^{3s}}=1$, then
\begin{equation}\label{(20)}
      \begin{cases}
    \left(\frac{k}{h}\right)^{q^{5s}}= \xi ^{q^{2s}+1}\\
   \left( \frac{k}{h} \right )^{q^{2s}+q^{4s}}=\xi^{q^{5s}}\\
    \left( \frac{k}{h} \right )^{q^{4s}}=\xi.
    \end{cases}
\end{equation}
Now, raising the third equation of \eqref{(20)} the $q^s$-power and taking into account the first one, we get that $\xi^{q^{2s}-q^{s}+1}=1$, and so the system becomes
\begin{equation}\label{syst_case_1}
      \begin{cases}
    \left(\frac{k}{h}\right)^{q^{5s}}= \xi ^{q^{s}}\\
   \left( \frac{k}{h} \right )^{q^{2s}+q^{4s}}=\xi^{q^{5s}}\\
    \left( \frac{k}{h} \right )^{q^{4s}}=\xi.
    \end{cases}
\end{equation}
From the first equation of above System (\ref{syst_case_1}) and taking into account the expression of $\xi$, we have
\begin{equation*}
    \xi=\frac{(\xi h^{q^{4s}}+\xi^{q^s}h^{q^{5s}})(h+1)}{(h^{q^{4s}}+h^{q^{5s}})(\xi^{q^{2s}}h+1)}.
\end{equation*}
From latter formula we obtain
\begin{equation*}
    (\xi^{q^{s}}-\xi)h^{q^{4s}+1}= (\xi^{q^{s}}-\xi)h^{q^{5s}}.
\end{equation*}
Now if $\xi \not \in \F_{q}$ (that with condition $\xi^{q^{2s}-q^s+1}=1$ implies $\xi \neq 1$), then $h^{q^{4s}+1}=h^{q^{5s}}$, a contradiction, \cite[Lemma 2.3\,$(3)$]{Bartoli_Zanella_Zullo}. Hence $\xi=1$, $h=k$ and, by system \eqref{a-d-systems=l}, $a \in \F_{q^2}$ and $d=a^{q}$.

\medskip
\noindent If otherwise $H_{h,s}(bH_{k,s}(x))=cx,$ then by expanding this formula we get
\begin{align*}
\hspace{-0.8cm}&\left(k^{q^s+q^{3s}}(1-h^{1+q^{2s}})(k^{q^s}+k^{q^{2s}})b^{q^{s}}+(h+h^2)(k^{q^{3s}}+k^{2q^{3s}})b^{q^{3s}}+h^{1+q^{2s}}(h+h^{q^s})(1-k^{q^{5s}+q^{s}})b^{q^{5s}}\right)x\\
&+\left((1-h^{1+q^{2s}})(k^{q^s}+k^{2q^{s}})b^{q^{s}}+(h+h^2)(1-k^{q^{3s}+q^{5s}})b^{q^{3s}}+k^{q^s+q^{5s}}h^{1+q^{2s}}(h+h^{q^s})(k^{q^{5s}}+k)b^{q^{5s}}\right)x^{q^{4s}}\\
&+\left((1-h^{1+q^{2s}})(1-k^{q^s+q^{3s}})b^{q^{s}}+k^{q^{3s}+q^{5s}}(h+h^2)(k^{q^{3s}}+k^{q^{4s}})b^{q^{3s}}+h^{1+q^{2s}}(h+h^{q^s})(k^{q^{5s}}+k^{2q^{5s}})b^{q^{5s}}\right)x^{q^{2s}}\\
&=cx.
\end{align*}
Now, comparing the coefficients of $x, x^{q^{2s}}$ and $x^{q^{4s}}$ on the left and right side, and taking into account that $h^{q^{3s}+1}=k^{q^{3s}+1}=-1$, we obtain the following linear system in the unknowns $b^{q^s},b^{q^{3s}}$ and $b^{q^{5s}}$:
\begin{equation}\label{LSt=3}
A_{h,k,s} \begin{pmatrix}
    b^{q^s} \\
    b^{q^{3s}} \\
    b^{q^{5s}} 
    \end{pmatrix}=
    \begin{pmatrix}
    c \\
    0 \\
    0 
    \end{pmatrix}
\end{equation}
where
\begin{equation}
\hspace{-0.5cm}A_{h,k,s}=\begin{pmatrix}
    k^{q^s+q^{3s}}(1-h^{1+q^{2s}})(k^{q^s}+k^{q^{2s}}) & (h+h^2)(k^{q^{3s}}+k^{2q^{3s}})  & h^{1+q^{2s}}(h+h^{q^s})(1-k^{{q^{5s}}+q^s})\\
(1-h^{1+q^{2s}})(k^{q^s}+k^{2q^s}) & (h+h^2)(1-k^{q^{3s}+q^{5s}}) & k^{q^{5s}+q^s}h^{1+q^{2s}}(h+h^{q^s})(k^{q^{5s}}+k)\\
(1-h^{1+q^{2s}})(1-k^{q^s+q^{3s}}) &  k^{q^{3s}+q^{5s}}(h+h^2)(k^{q^{3s}}+k^{q^{4s}}) & h^{1+q^{2s}}(h+h^{q^s})(k^{q^{5s}}+k^{2q^{5s}})
    \end{pmatrix}.
\end{equation}
    The determinant of $A_{h,k,s}$ is 
  
    \begin{equation}
     h^{1+q^{2s}}(h^{1+q^{2s}}-1) (h+h^2)(h+h^{q^s})  (k^{q^{s}+q^{3s}+q^{5s}}+1)^3. \end{equation}
   
 First we see that this expression cannot be equal to zero. By way of contradiction, let us assume that this is the case. We can easily see that none of the terms containing $h$ can be equal to zero, because this is in contradiction with either the fact that $h$ is non zero or that $h^{q^{3s}+1}=-1$. The term with $k$ needs more attention. But note that if it is zero, then
 \begin{equation*}
     k^{q^{s}+q^{3s}+q^{5s}}=-1.
 \end{equation*}
Rising the above equation to the $q^{3s}$ power and then taking ratio with $k^{1+q^{3s}}=-1$ we get
 \begin{equation*}
   k^{q^{2s}-q^s+1}=1 
 \end{equation*}
 which finally leads to a contradiction as $-1=k^{1+q^{3s}}=(k^{q^{2s}-q^s+1})^{q^s+1}=1.$  Hence, the determinant of $A_{h,k,s}$ is different from zero. \\ 
 \noindent As by the assumption $c\neq 0$ the unique solution of the System (\ref{LSt=3}) is ${\bf b}=\big (b^{q^s},b^{q^{3s}},b^{q^{5s}}\big ) \in \mathbb{F}_{q^6}^3,$ with
\begin{equation}\label{solutiont=3}
\begin{split}
   &b^{q^s}=\frac{ck^{q^{5s}}(1-k^{q^{3s}+q^{5s}})}{(h^{1+q^{2s}}-1)(k^{q^{s}+q^{3s}+q^{5s}}+1)^2}\\
&b^{q^{3s}}=\frac{-ck^{1+q^s+q^{5s}}(k^{q^{3s}}+1)}{(h+h^2)(k^{q^{s}+q^{3s}+q^{5s}}+1)^2}\\
&b^{q^{5s}}=\frac{c(1-k^{{q^s}+q^{3s}})}{ h^{1+q^{2s}}(h+h^{q^s})(k^{q^{s}+q^{3s}+q^{5s}}+1)^2}.
 \end{split}
\end{equation}

Also in this case, we focus on getting a relation between $h$ and $k$ and eventually the  values of $b$ and $c$ satisfying the second equation expressed in (\ref{eq:equiv_t=3}). In order to do this we start by computing the ratio between second and first equation in (\ref{solutiont=3}). From this we get
\begin{align}
b^{q^{3s}-q^s}=-\frac{k^{1+q^{s}}(k^{q^{3s}}+1)(h^{1+q^{2s}}-1)}{h(h+1)(1-k^{q^{3s}+q^{5s}})}. \label{firstratio}
\end{align}
Similarly, taking the ratio of third and second, and of the first and the third equation in (\ref{solutiont=3}) we get
\begin{align}
   &b^{q^{5s}-q^{3s}}=-\frac{(1-k^{q^s+q^{3s}})(h+1)}{h^{q^{2s}}(h+h^{q^s})k^{1+q^s+q^{5s}}(k^{q^{3s}}+1)} \label{secondratio}, \\ 
   & b^{q^{s}-q^{5s}}=\frac{h^{1+q^{2s}}(h+h^{q^s})k^{q^{5s}}(1-k^{q^{3s}+q^{5s}})}{(1-k^{q^s+q^{3s}})(h^{1+q^{2s}}-1)} \label{thirdratio},
\end{align} respectively. 
Since the $q^{2s}$-power of Formulas \eqref{firstratio},\eqref{secondratio} and \eqref{thirdratio} returns Formulas \eqref{secondratio}, \eqref{thirdratio} and \eqref{firstratio}, respectively, we get
\begin{equation}\label{rhosystem}
\begin{cases}
k^{q^s}h^{q^{2s}}=\left [\frac{(1-k^{q^s+q^{5s}})(h+1)}{(h^{q^{4s}}+h^{q^{5s}})(k^{q^{3s}}+1)}\right]^{q^{2s}+1}\\
k^{q^{3s}}h^{q^{4s}}=\frac{(k^{q^{3s}}+1)(h^{q^{4s}}+h^{q^{5s}})}{(h+1)(1-k^{q^{5s}+q^{s}})}\\
k^{q^s+q^{3s}}h^{q^{2s}+q^{4s}}=\frac{(1-k^{q^s+q^{3s}})(h^{q^{2s}}+1)}{(k^{q^{5s}}+1)(h+h^{q^s})}.
\end{cases}
\end{equation}
Putting $\xi=\frac{(1-k^{q^s+q^{5s}})(h+1)}{(h^{q^{4s}}+h^{q^{5s}})(k^{q^{3s}}+1)}$, conditions above reads
\begin{equation}\label{rhosystem2}
\begin{cases}
k^{q^s}h^{q^{2s}}=\xi^{q^{2s}+1}\\
k^{q^{3s}}h^{q^{4s}}=1/\xi\\
k^{q^s+q^{3s}}h^{q^{2s}+q^{4s}}=\xi^{q^{2s}}.
\end{cases}
\end{equation}
By the first and the second equation, one gets that $\xi^{q^{4s}+q^{2s}+1}=1$ and since $h^{1+q^{3s}}=k^{1+q^{3s}}=-1$, $\xi^{q^{3s}+1}=1$ and so $\xi^{q^{2s}-q^{s}+1}=1$. Therefore System \eqref{rhosystem2} becomes
\begin{equation}\label{rhosystem3}
\begin{cases}
k^{q^s}h^{q^{2s}}=\xi^{q^s}\\
k^{q^{3s}}h^{q^{4s}}=\xi^{q^{3s}}\\
k^{q^s+q^{3s}}h^{q^{2s}+q^{4s}}=\xi^{q^{2s}}.
\end{cases}
\end{equation}
From the first equation in System (\ref{rhosystem3}), and taking into account the expressions of $\xi$, we have
\begin{equation}
\begin{split}
    &\xi=\frac{(h^{q^{2s}+1}-\xi)(h+1)h^{q^{4s}}}{h^{q^{2s}+1}(h^{q^{4s}}+h^{q^{5s}})(\xi^{q^{3s}}+h^{q^{4s}})}\\
    &\xi(h^{1+q^{2s}+q^{4s}}+1)h^{q^{4s}}=h(h^{1+q^{2s}+q^{4s}}+1)\\
&\xi=h^{1-q^{4s}},
    \end{split}
\end{equation}
obtaining $h=-k$. 
Then, taking this into account in \eqref{solutiont=3}, we get

\begin{equation}
\begin{split}
   &b^{q^s}=\frac{ck^{q^{5s}}(1-k^{q^{3s}+q^{5s}})}{(k^{1+q^{2s}}-1)(k^{q^{s}+q^{3s}+q^{5s}}+1)^2}\\
&b^{q^{3s}}=-\frac{ck^{q^s+q^{5s}}(k^{q^{3s}}+1)}{(k-1)(k^{q^{s}+q^{3s}+q^{5s}}+1)^2}\\
&b^{q^{5s}}=-\frac{c(1-k^{{q^s}+q^{3s}})}{ k^{1+q^{2s}}(k+k^{q^s})(k^{q^{s}+q^{3s}+q^{5s}}+1)^2}.
 \end{split}
\end{equation}
Finally, by raising one of these three equations to the $q^{2s}$-power and comparing it with the other ones, we obtain 
\[ c=\lambda k^{q^{2s}+1},\]
for some $\lambda \in \F_{q^2}$, and so $b=\frac{\lambda^{q}k^{q^{4s}}}{(k^{1+q^{2s}+q^{4s}}+1)^2}$.\\

\noindent 
\textbf{Case 2.} Now, we assume $s \equiv -\ell \pmod 6$. If $dH_{k,-s}(x)=H_{h,s}(ax)$ then we get the following conditions
\begin{equation}\label{a-d-systembczero(-1)}
    \begin{cases}
    d(1-k^{1+q^{4s}})=a^{q^{5s}}h^{1+q^{2s}}(h+h^{q^s})\\
    d(k+k^2)=a^{q^{3s}}(h+h^2)\\
    d k^{1+q^{4s}}(k+k^{q^{5s}})=a^{q^s}(1-h^{1+q^{2s}}).
    \end{cases}
\end{equation}
As before we have
\begin{equation}
\begin{split}
    &a^{q^{3s}-q^s}=\frac{k(k+1)(1-h^{1+q^{2s}})}{h(h+1)(k+k^{q^{5s}})k^{1+q^{4s}}},\\
    &a^{q^{5s}-q^{3s}}=\frac{(1-k^{1+q^{4s}})(h+1)}{h^{q^{2s}}(h+h^{q^s})k(k+1)},\\
    &a^{q^s-q^{5s}}=\frac{(k+k^{q^{5s}})k^{1+q^{4s}}h^{1+q^{2s}}(h+h^{q^s})}{(1-k^{1+q^{4s}})(1-h^{1+q^{2s}})}.
 \end{split}
\end{equation}
Since the $q^{2s}$-power of first, second and third equation above, is second, third and first equation respectively, then,

\begin{equation}
    \begin{cases}
   k^{q^{3s}}h^{q^{2s}}= -\left [\frac{(k^{q^{2s}}+k^{q^{s}})(h+1)}{(h^{q^{5s}}+h^{q^{4s}})(k+1)} \right] ^{q^{2s}+1}\\
   k^{q^{4s}}h^{q^{2s}+q^{4s}}=-\frac{(k^{q^{3s}}+k^{q^{4s}})(h^{q^{2s}}+1)}{(h+h^{q^{s}})(k^{q^{2s}}+1)}\\
    k^{1+q^{4s}}h^{q^{4s}}=-\frac{(k+1)(h^{q^{4s}}+h^{q^{5s}})}{(h+1)(k^{q^{2s}}+k^{q^{s}})}
    \end{cases}
\end{equation}
Putting $\xi=\frac{(k^{q^{2s}}+k^{q^{s}})(h+1)}{(h^{q^{5s}}+h^{q^{4s}})(k+1)}$, then the system above becomes
\begin{equation}\label{11}
    \begin{cases}
    k^{q^{3s}}h^{q^{2s}}=-\xi^{q^{2s}+1}\\
    k^{q^{4s}}h^{q^{2s}+q^{4s}}=-\xi^{q^{2s}}\\
     k^{1+q^{4s}}h^{q^{4s}}=-\frac{1}{\xi}
    \end{cases}
\end{equation}
    From the last equation of System \eqref{11} we have $\xi^{q^{3s}+1}=-1$ and using the first and second equation,  we get $h^{1+q^{2s}+q^{4s}}=-\xi^{1+q^{2s}+q^{4s}}$. Now, taking into account this last relation between $h$ and $\xi$, and the first equation of \eqref{11} in the expression of $\xi$, we get
    \begin{equation}
        \begin{split}
            \xi&=\frac{(\xi^{q^{5s}+q^{s}}h^{q^{4s}}+\xi^{q^{4s}+1}h^{q^{3s}})(h+1)}{(\xi^{q^{5s}+q^{3s}}h^{q^{2s}}+1)(h^{q^{4s}}+h^{q^{5s}})}\\
(-\xi^{q^{5s}}h^{q^{2s}}+\xi)(h^{q^{4s}}+h^{q^{5s}})&=(\xi^{q^{5s}+q^{s}}h^{q^{4s}}+\xi^{q^{4s}+1}h^{q^{3s}})(h+1)\\
-\xi^{q^{5s}}h^{q^{2s}+q^{4s}}+\xi^{q^{5s}}+\xi h^{q^{4s}}+\xi h^{q^{5s}}&=\xi^{q^{5s}+q^s}h^{q^{4s}+1}+\xi^{q^{5s}+q^s}h^{q^{4s}}-\xi^{q^{4s}+1}+\xi^{q^{4s}+1}h^{q^{3s}}.        \end{split}
    \end{equation}
 Now as $h^{1+q^{2s}+q^{4s}}=-\xi^{1+q^{2s}+q^{4s}}$, $\xi^{q^{3s}+1}=-1$ and $h^{q^{3s}+1}=-1$, we have $\xi^{q^{5s}}h^{q^{2s}+q^{4s}}=-\xi^{q^{4s}+1}h^{q^{3s}}$ and $\xi^{q^{5s}+q^{s}}h^{q^{4s}+1}=\xi h^{q^{5s}}$. Therefore the above equation reduces to the following
    \begin{equation}
        \begin{split}
(\xi^{q^{5s}}+\xi^{q^{4s}+1}) &=(\xi^{q^{5s}+q^s}-\xi)h^{q^{4s}}\\
 (\xi^{q^{5s}+q^s}-\xi) &=\xi^{q^s}(\xi^{q^{5s}+q^s}-\xi)h^{q^{4s}}\\
    \xi&=-h   .
        \end{split}
    \end{equation}
    Using this in first equation of (\ref{11}) we have $h=k^{-1}$. Substituting this in second and third equation of (\ref{a-d-systembczero(-1)}) we get $d k^3=a^{q^{3s}}$ and $d k^{2+q^{4s}}=a^{q^s}$. Taking $q^{2s}$-power of the latter equation and taking it into account in the former, we get $d=\lambda k^{-2}$, for some $\lambda\in \mathbb{F}_{q^2}$ and hence $a=-\lambda^{q^s}k^{-1}$.\\

Finally, we take into consideration the case $H_{h,s}(bH_{k,-s}(x))=cx$. 
\noindent Expanding this equation we get 
\begin{align*}
&[(1-h^{1+q^{2s}})(1-k^{q^s+q^{5s}})b^{q^s}+(h+h^2)(k^{q^{3s}}+k^{2q^{3s}})b^{q^{3s}}+h^{1+q^{2s}}(h+h^{q^s})k^{q^{5s}+q^{3s}}(k^{q^{5s}}+k^{q^{4s}})b^{q^{5s}}]x\\
&+[(1-h^{1+q^{2s}})k^{q^s+q^{5s}}(k^{q^{s}}+k)b^{q^s}+(h+h^2)(1-k^{q^{3s}+q^{s}})b^{q^{3s}}+h^{1+q^{2s}}(h+h^{q^s})(k^{q^{5s}}+k^{2q^{5s}})b^{q^{5s}}]x^{q^{2s}}\\
&+[(1-h^{1+q^{2s}})(k^{q^s}+k^{2q^s})b^{q^s}+(h+h^2)k^{q^{3s}+q^{s}}(k^{q^{3s}}+k^{q^{2s}})b^{q^{3s}}+h^{1+q^{2s}}(h+h^{q^s})(1-k^{q^{5s}+q^{3s}})b^{q^{5s}}]x^{q^{4s}}\\
&=cx.
\end{align*} 
Now, by equating the coefficients of the terms $x, x^{q^{2s}}$ and $x^{q^{4s}}$ on the left and right side, and taking into account that $h^{q^{3s}+1}=k^{q^{3\ell}+1}=-1$, we obtain the following linear system in $b^{q^s},b^{q^{3s}}$ and $b^{q^{5s}}$:
\begin{equation}\label{LSt=3(-1)}
    A'_{h,k,s} \begin{pmatrix}
    b^{q^s} \\
    b^{q^{3s}} \\
    b^{q^{5s}} 
    \end{pmatrix}=
    \begin{pmatrix}
    c \\
    0 \\
    0 
    \end{pmatrix},
\end{equation}
where
\begin{equation*}
\hspace{-1cm}
A'_{h,k,s}=
\begin{pmatrix}
    (1-h^{1+q^{2s}})(1-k^{q^s+q^{5s}}) & (h+h^2)(k^{q^{3s}}+k^{2q^{3s}}) & k^{q^{5s}+q^{3s}}h^{1+q^{2s}}(h+h^{q^s})(k^{q^{5s}}+k^{q^{4s}})\\
k^{q^s+q^{5s}}(1-h^{1+q^{2s}})(k^{q^{s}}+k) &(h+h^2)(1-k^{q^{3s}+q^{s}}) & h^{1+q^{2s}}(h+h^{q^s})(k^{q^{5s}}+k^{2q^{5s}})\\
(1-h^{1+q^{2s}})(k^{q^s}+k^{2q^s}) & k^{q^{3s}+q^{s}}(h+h^2)(k^{q^{3s}}+k^{q^{2s}}) & h^{1+q^{2s}}(h+h^{q^s})(1-k^{q^{5s}+q^{3s}})
\end{pmatrix}
  \end{equation*}
Direct computation shows that ${\rm det}(A'_{h,k,s})={\rm det}(A_{h,k,s}) \neq 0$. Therefore, $A'_{h,k,s}$ is also non singular.
 \noindent As by the assumption $c\neq 0$, then the unique solution of the Linear System in (\ref{LSt=3(-1)}), is ${\bf b}=\big (b^{q^s},b^{q^{3s}},b^{q^{5s}}\big ) \in \mathbb{F}_{q^6}^3,$ with
\begin{equation}\label{solutiont=3_1}
\begin{split}
   &b^{q^s}=\frac{c(k^{q^{3s}+q^{5s}}-1)}{(h^{1+q^{2s}}-1)(k^{q^{s}+q^{3s}+q^{5s}}+1)^2},\\
&b^{q^{3s}}=\frac{-ck^{1+q^s+q^{5s}}(k^{q^{3s}}+1)}{(h+h^2)(k^{q^{s}+q^{3s}+q^{5s}}+1)^2},\\
&b^{q^{5s}}=\frac{ck^{q^s}(k^{{q^s}+q^{3s}}-1)}{ h^{1+q^{2s}}(h+h^{q^s})(k^{q^{s}+q^{3s}+q^{5s}}+1)^2}.
 \end{split}
\end{equation}

As in previous case we consider the following ratios
\begin{align}
&b^{q^{3s}-q^s}=-\frac{k^{1+q^{s}+q^{5s}}(k^{q^{3s}}+1)(h^{1+q^{2s}}-1)}{h(h+1)(k^{q^{3s}+q^{5s}}-1)}, \label{firstratio_1}\\
&b^{q^{5s}-q^{3s}}=-\frac{(k^{q^s+q^{3s}}-1)(h+1)}{h^{q^{2s}}(h+h^{q^s})k^{1+q^{5s}}(k^{q^{3s}}+1)}, \label{secondratio_1} \\ 
 & b^{q^{s}-q^{5s}}=\frac{h^{1+q^{2s}}(h+h^{q^s})(k^{q^{3s}+q^{5s}}-1)}{k^{q^s}(k^{q^s+q^{3s}}-1)(h^{1+q^{2s}}-1)} .\label{thirdratio_1}
\end{align}
Again, since the $q^{2s}$-power of Formulas \eqref{firstratio_1}, \eqref{secondratio_1} and \eqref{thirdratio_1} are Formulas \eqref{secondratio_1}, \eqref{thirdratio_1} and \eqref{firstratio_1}, respectively, we get the system below
\begin{equation}\label{rhosystem_1}
\begin{cases}
k^{q^s}h^{q^{2s}}=\left [\frac{(k^{q^s+q^{5s}}-1)(h+1)}{(h^{q^{4s}}+h^{q^{5s}})(k^{q^{3s}}+1)}\right]^{q^{2s}+1}\\
k^{q^{2s}}h^{q^{4s}+q^{2s}}=-\frac{(k^{q^s+q^{3s}}-1)(h^{q^{2s}}+1)}{(h+h^{q^s})(k^{q^{5s}}+1)}\\
k^{q^s+q^{5s}}h^{q^{s}}=-\frac{(k^{q^s+q^{5s}}-1)(h+1)}{(k^{q^{3s}}+1)(h^{q^{4s}}+h^{q^{5s}})}.
\end{cases}
\end{equation}
By setting $\xi=\frac{(k^{q^s+q^{5s}}-1)(h+1)}{(h^{q^{4s}}+h^{q^{5s}})(k^{q^{3s}}+1)}$, equations in (\ref{rhosystem_1}) becomes
\begin{equation}\label{rhosystem2_1}
\begin{cases}
k^{q^s}h^{q^{2s}}=\xi^{q^{2s}+1}\\
k^{q^{2s}}h^{q^{4s}+q^{2s}}=-\xi^{q^{2s}}\\
k^{q^s+q^{5s}}h^{q^s}=-\xi.
\end{cases}
\end{equation}
From the third equation, we have $\xi^{q^{3s}+1}=-1$, and so from the first and second equation we get, $h^{1+q^{2s}+q^{4s}}=\xi^{1+q^{2s}+q^{4s}}$. Now substituting the value of $k^{q^s+q^{5s}}$ and $k^{q^{3s}}$ from third and first equation to the expression of $\xi$ we get
\begin{equation*}
    \begin{split}
        &\xi=\frac{(\xi h^{q^{4s}}-1)(h+1)}{(-\xi^{q^{4s}+q^{2s}}h^{q^{s}}+1)(h^{q^{4s}}+h^{q^{5s}})},\\
       &{(h^{1+q^{2s}}+\xi)(h^{q^{4s}}+h^{q^{5s}})}=(\xi h^{q^{4s}}-1)(h+1)\\
        &\xi=h^{q^{2s}}.
    \end{split}
\end{equation*}
Using first equation of the system (\ref{rhosystem2_1}) we get $h=-k^{-1}$. Substituting this relation into (\ref{solutiont=3_1}), we get
\begin{equation}
    \begin{split}
      b^{q^{s}}&=\frac{c}{(k^{q^{s}+q^{3s}+q^{5s}}+1)^2}, \\
      b^{q^{3s}}&=-\frac{c k^{1+q^{s}+q^{5s}}}{k^{q^{3s}}(k^{q^{s}+q^{3s}+q^{5s}}+1)^2}  ,\\
      b^{q^{5s}}&=\frac{ck^{2q}}{k^{q^{3s}+q^{5s}}(k^{q^{s}+q^{3s}+q^{5s}}+1)^2} \,.
    \end{split}
\end{equation}
From the first two equations we easily get $c=\lambda k^{q^{3s}-q^{s}}$,where $\lambda\in \mathbb{F}_{q^2}$. This gives $b= \frac{\lambda^q k^{q^{2s}-1}}{(k^{1+q^{2s}+q^{4s}}+1)^2}$.
\end{proof}

Regarding the case when $t=4$, we prove the following result.

\begin{lemma}\label{lm:case_t_=4}
Assume $n=8$ and that $\mathcal{C}_{h,4,s}$ and \,$\mathcal{C}_{k,4,\ell}$, are equivalent. One gets the following: 
\begin{itemize}
\item[\textnormal{1.}] if $\ell \equiv s \pmod 8$, then $h^{\rho}=\pm k,$
\item[\textnormal{2.}] if $\ell \equiv -s \pmod 8$, then $h^{\rho}=\pm k^{-1},$
\item[\textnormal{3.}] if $\ell \equiv 3s \pmod 8$, then $h^{\rho}=\pm k,$
\item[\textnormal{4.}] if $\ell \equiv 5s \pmod 8$, then $h^{\rho}=\pm k^{-1},$
\end{itemize}
where $\rho \in \Aut(\F_{q^8})$. 
\end{lemma}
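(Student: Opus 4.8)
The plan is to run the same machinery as in the proof of Lemma~\ref{lm:case_t_=3}, now with $n=8$. Since $(8,s)=1$, the integer $s$ is odd, so the four exponents $s,3s,5s,7s$ of $\psi_{h,4,s}(x)=x^{q^s}+x^{q^{3s}}+h^{1+q^s}x^{q^{5s}}+h^{1-q^{7s}}x^{q^{7s}}$ are pairwise congruent modulo $2$ and their pairwise differences are multiples of $2$; hence $m_{\psi_{h,4,s}}=2$ and $\psi_{h,4,s}(x)$ is already in standard form, and likewise for $\psi_{k,4,\ell}(x)$. Thus, by Corollary~\ref{eq-Snq} together with Theorem~\ref{diagonal}, the codes $\mathcal{C}_{h,4,s}$ and $\mathcal{C}_{k,4,\ell}$ are equivalent if and only if there exist $a,b,c,d\in\F_{q^8}^*$ and $\rho\in\Aut(\F_{q^8})$ such that
\[
 d\,\psi_{k,4,\ell}(x)=\psi_{h,4,s}^{\rho}(ax)\qquad\text{or}\qquad \psi_{h,4,s}^{\rho}\big(b\,\psi_{k,4,\ell}(x)\big)=cx .
\]
As in \cite[Remark~4.5]{neri_santonastaso_zullo}, the automorphism $\rho$ acts only on the coefficients (through $h$), so we may assume $\rho=\mathrm{id}$ throughout and reinstate it on $h$ at the very end.

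Next I would compare exponents modulo $8$. Both $\{s,3s,5s,7s\}$ and $\{\ell,3\ell,5\ell,7\ell\}$ reduce to the set $\{1,3,5,7\}$ of odd residues modulo $8$, so the monomials on the two sides of the ``diagonal'' equation $d\,\psi_{k,4,\ell}(x)=\psi_{h,4,s}(ax)$ must be matched; this forces $\ell\equiv js\pmod 8$ for a unique $j\in\{1,3,5,7\}$, and these four possibilities are exactly the four cases of the statement: $j=1$ and $j=3$ pair the monomials of $\psi_{k,4,\ell}$ with those of $\psi_{h,4,s}$ in an order that keeps $h$ and $k$ ``on the same side'' (and will yield $h^{\rho}=\pm k$), while $j=7\equiv-1$ and $j=5$ reverse them (and will yield $h^{\rho}=\pm k^{-1}$). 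In each of the four cases I would treat both equations above. For the first one, expanding and equating the coefficients of $x^{q^{s}},x^{q^{3s}},x^{q^{5s}},x^{q^{7s}}$ and using the norm conditions $h^{1+q^{4s}}=k^{1+q^{4s}}=-1$ yields a system that pins down $a$ and $d$ in terms of $h,k$. For the second one, note that the composition $\psi_{h,4,s}\circ\psi_{k,4,\ell}$ is $\F_{q^2}$-linear, its exponents lying in $\{0,2s,4s,6s\}$ modulo $8$; equating the coefficients of $x,x^{q^{2s}},x^{q^{4s}},x^{q^{6s}}$ then produces a $4\times4$ linear system in the unknowns $b^{q^s},b^{q^{3s}},b^{q^{5s}},b^{q^{7s}}$ with right-hand side $(c,0,0,0)$.

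For the composition equation one first checks that the $4\times4$ coefficient matrix is nonsingular: its determinant equals, up to a nonzero scalar, a product of suitable powers of $h$, $h+h^2$, $h+h^{q^s}$, $h^{1+q^{2s}}-1$ and of $k^{q^s+q^{3s}+q^{5s}+q^{7s}}+1$, and none of these can vanish, the last one because $k^{q^s+q^{5s}}=(k^{1+q^{4s}})^{q^s}=-1$ and $k^{q^{3s}+q^{7s}}=(k^{1+q^{4s}})^{q^{3s}}=-1$ force $k^{q^s+q^{3s}+q^{5s}+q^{7s}}=1\ne-1$ (recall $q$ is odd). One then solves for $\mathbf{b}=(b^{q^s},b^{q^{3s}},b^{q^{5s}},b^{q^{7s}})$ by Cramer's rule. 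In every sub-case I would eliminate $a$ (respectively $b$) by forming the ratios of consecutive equations and raising them to the $q^{2s}$-power; the cyclic pattern this produces, together with $h^{1+q^{4s}}=k^{1+q^{4s}}=-1$, collapses the system to a handful of multiplicative constraints on an auxiliary quantity $\xi$ (a ratio of polynomial expressions in $h$ and $k$), of the shape $\xi^{q^{4s}+1}=\pm1$ and $\xi^{q^{2s}-q^s+1}=1$. Substituting the definition of $\xi$ back in and simplifying produces an identity of the form $(\xi^{q^s}-\xi)\,h^{A}=(\xi^{q^s}-\xi)\,h^{B}$ with $A\ne B$; invoking the non-degeneracy fact that $h^{q^{as}+1}\ne h^{q^{bs}}$ for the relevant exponents (the $n=8$ analogue of \cite[Lemma~2.3(3)]{Bartoli_Zanella_Zullo}, already used in Lemma~\ref{lm:case_t_=3}), one deduces $\xi\in\F_q$, in fact $\xi=\pm1$ or $\xi$ equal to an explicit power of $h$, and hence $h=\pm k$ in cases $1$ and $3$ and $h=\pm k^{-1}$ in cases $2$ and $4$. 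Reinstating $\rho$ gives the four assertions, and along the way one records the resulting shapes of $a,d$ (respectively $b,c$) for later use in the sufficiency direction.

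The step I expect to be the main obstacle is the bookkeeping. Relative to the $t=3$ case there is one extra monomial, so the ``anti-diagonal'' equation produces a $4\times4$ rather than a $3\times3$ linear system, which makes the determinant factorisation and the Cramer inversion substantially heavier; moreover there are now four exponent-matching regimes --- including the genuinely new ones $\ell\equiv3s$ and $\ell\equiv5s$, which do not occur for $t=3$ --- each split into two sub-equations, so the number of parallel-but-distinct computations roughly doubles. Isolating once and for all the right non-degeneracy lemma for elements of norm $-1$ in $\F_{q^8}$ over $\F_{q^4}$, and organising the four cases so that the elimination can be phrased uniformly in terms of $\xi$, will be what keeps the argument under control.
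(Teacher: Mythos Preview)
Your high-level framework is right: $\psi_{h,4,s}$ is already in standard form, Corollary~\ref{eq-Snq} reduces to the two displayed equations, $\rho$ may be absorbed into $h$, and the case split $\ell\equiv s,3s,5s,7s\pmod 8$ is forced by matching exponents. Two parts of the execution, however, do not transfer from $t=3$.

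First, the determinant of the $4\times4$ coefficient matrix is not a product of the $t=3$-type factors you list. A direct computation (in Case~1, say) gives
\[
\det(A_{h,k,s})=-\Tr_{q^8/q}\big(k^{q^s+1}+k^{q^{3s}+1}\big)\,h^{2-q^{5s}-q^{7s}}=-\Tr_{q^8/q^2}(k)^{\,q^s+1}\,h^{2-q^{5s}-q^{7s}},
\]
so non-singularity amounts to $k+k^{q^{2s}}+k^{q^{4s}}+k^{q^{6s}}\neq0$, which via $k^{1+q^{4s}}=-1$ reduces to $k^{q^{2s}+1}\neq1$; this is \cite[Proposition~3.2]{longobardi_marino_trombetti_zhou}, not the lemma from \cite{Bartoli_Zanella_Zullo} you invoke. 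The factor $k^{q^s+q^{3s}+q^{5s}+q^{7s}}+1$ you write down equals $2$ identically (as your own computation shows) and plays no role.

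Second, and more seriously, the $\xi$-elimination from Lemma~\ref{lm:case_t_=3} does not collapse here to a clean identity $(\xi^{q^s}-\xi)h^A=(\xi^{q^s}-\xi)h^B$. After Cramer inversion the paper instead recognises the ratios $b^{q^{3s}}/b^{q^s}$ and $b^{q^{7s}}/b^{q^{5s}}$ as $\pm(\cdot)^{q^{2s}-1}$, introduces an element $\omega\in\F_{q^8}$ with $\omega^{q^{2s}-1}=-1$, writes $b^{q^s}$ and $b^{q^{5s}}$ explicitly, and deduces $h^{q^{5s}}=\nu k^{\mp q^s}$ with $\nu\in\F_{q^2}^*$, hence $\nu^2=1$. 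The key phenomenon you do not foresee is that in four of the eight sub-cases (the composition equation in Cases~1 and~4, the diagonal equation in Cases~2 and~3) substituting back yields an identity $X=-\nu^{q^s-1}X$ with $X\neq0$, so $\nu^{q^s-1}=-1$, contradicting $\nu^2=1$: these sub-cases are \emph{impossible}, not ``give another value of $h$''. Your plan assumes every sub-case produces a solution (as happened for $t=3$), so it lacks the mechanism to detect these contradictions; moreover, showing the relevant $X\neq0$ is itself a nontrivial step that again leads back to $k^{q^{2s}+1}\neq1$, plus a separate analysis when $k\in\F_{q^2}$.
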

\begin{proof}

\noindent Assume that $\mathcal{C}_{h,4,s}=\langle x,\psi_{h,t,s}(x) \rangle_{\F_{q^8}}$ and\, $\mathcal{C}_{k,4,\ell}=\langle x,\psi_{k,\ell,s}(x) \rangle_{\F_{q^8}},$ are equivalent. By Corollary \ref{eq-Snq}, this may happen if and only if Formula \eqref{eq-standard} holds true for any pair of standard forms of the $q$-polynomials $\psi_{h,4,s}$ and $\psi_{k,4,\ell}$, respectively.
As mentioned before, the polynomials $\psi_{h,t,s}$ with $t \geq 4$ and even, is already expressed in standard form. Therefore, by Corollary \ref{eq-Snq}, in order to investigate the equivalence issue between two elements in this class, we only need to consider the two following equations:
\begin{equation}\label{eq:equiv_t=4}
		d\psi_{k,4,\ell}(x)=\psi^{\rho}_{h,4,s}(ax) \quad \textnormal{and} \quad \psi_{h,4,s}\left( b \psi_{k,4,\ell}^\rho(x)\right)=cx
\end{equation}
for $a, b, c, d \in \F^*_{q^8}$ and $\rho \in \Aut(\F_{q^8})$. Also here, without loss of generality, again by \cite[Remark 4.5]{neri_santonastaso_zullo}, we will suppose $\rho$ to be the identity.

\noindent Because of what asserted in the first part of the proof, we can proceed by dividing the remaining part of the argument into four cases.\\
\textbf{Case 1.} We assume $\ell \equiv s \pmod{8}$. Now, if $d\psi_{k,4,s}(x)=\psi_{h,4,s}(ax)$, then by comparing the coefficients, one easily gets $d=a^{q^s}$ with $a \in \F_{q^2}$ and $h=\pm k$.\\
If otherwise $\psi_{h,4,s}(b\psi_{k,4,\ell}(x))=cx$, by expanding left side of this equation, one get

	\begin{align*}
		&\left(b^{q^s} + b^{q^{3s}}k^{q^{3s}-q^{2s}} +b^{q^{5s}}h^{1-q^{5s}}k^{q^{5s}-q^{2s}}+ b^{q^{7s}}h^{1-q^{7s}}\right)x^{q^{2s}} \\
		+&\left(b^{q^s}+b^{q^{3s}}-b^{q^{5s}}k^{q^{5s}-q^{4s}}h^{1-q^{5s}} -b^{q^{7s}}h^{1-q^{7s}}k^{q^{7s}-q^{4s}}\right)x^{q^{4s}}\\
		+&\left( -k^{q^s-q^{6s}}b^{q^s}+b^{q^{3s}}-h^{1-q^{5s}}b^{q^{5s}}+h^{1-q^{7s}}k^{q^{7s}-q^{6s}}b^{q^{7s}} \right)x^{q^{6s}}\\
		+&\left(b^{q^s}k^{q^s-1} -b^{q^{3s}}k^{q^{3s}-1}-h^{1-q^{5s}}b^{q^{5s}}+h^{1-q^{7s}}b^{q^{7s}}\right)x=cx.
	\end{align*}
	
By equating the coefficients of the terms $x, x^{q^{2s}}, x^{q^{4s}}$, and $x^{q^{6s}}$ on the left and right side, and taking into account that $h^{q^{4s}+1}=k^{q^{4\ell}+1}=-1$, we obtain the following linear system in $b^{q^s},b^{q^{3s}},b^{q^{5s}}$, and $b^{q^{7s}}$:

\begin{equation}\label{linear-system}
    \begin{pmatrix}
    1 & k^{q^{3s}-q^{2s}} & h^{1-q^{5s}}k^{q^{5s}-q^{2s}}& h^{1-q^{7s}} \\
    1 & 1 & -k^{q^{5s}-q^{4s}}h^{1-q^{5s}} & -h^{1-q^{7s}}k^{q^{7s}-q^{4s}} \\
    -k^{q^s-q^{6s}} & 1 &  -h^{1-q^{5s}} & h^{1-q^{7s}}k^{q^{7s}-q^{6s}}\\
    k^{q^s-1} & -k^{q^{3s}-1} & -h^{1-q^{5s}} & h^{1-q^{7s}}
    \end{pmatrix}
    \begin{pmatrix}
    b^{q^s} \\
    b^{q^{3s}} \\
    b^{q^{5s}} \\
    b^{q^{7s}}
    \end{pmatrix}=
    \begin{pmatrix}
    0 \\
    0 \\
    0 \\
    c \\
    \end{pmatrix}.
\end{equation}
Let us denote by $A_{h,k,s} \in \F_{q^8}^{4 \times 4}$ the matrix associated with it. First, we show that its determinant is not zero. In fact,  by recalling that $k^{1+q^{4s}}=-1$ and that $(s,8)=1,$ we get

\[\det(A_{h,k,s})=-\mathrm{Tr}_{q^8/q}(k^{q^s+1}+k^{q^{3s}+1})h^{2 - q^{5s} - q^{7s}}.\]
\noindent
If $\det(A_{h,k,s})=0$, since
\begin{equation*}
\mathrm{Tr}_{q^8/q}(k^{q^s+1}+k^{q^{3s}+1})=\mathrm{Tr}_{q^8/q^{2}}(k)^{q^s+1},
\end{equation*}
we would get that
\begin{equation*}
    k+k^{q^{2s}}+k^{q^{4s}}+k^{q^{6s}}=0
\end{equation*}
or equivalently that 
\begin{equation*}
    k+k^{q^{2s}}=\frac{1}{k}+\frac{1}{k^{q^{2s}}}.
\end{equation*}
Then
\begin{equation*}
    k^{q^{2s}+1}=1,
\end{equation*}
which, by \cite[Proposition 3.2]{longobardi_marino_trombetti_zhou}, can not be the case. Since $c \not = 0$, then the unique solution of System  \eqref{linear-system} is ${\bf b}=\big (b^{q^s},b^{q^{3s}},b^{q^{5s}},b^{q^{7s}}\big ) \in \mathbb{F}_{q^8}^4,$ with
\begin{equation}\label{b-expressions_case_1}
\begin{split}
    &b^{q^s}=\frac{c}{T}(k^{q^{5s}+q^{7s}}-1)(k^{q^{6s}-q^{4s}}+1)\\
       &b^{q^{3s}}=-\frac{c}{T}(k^{q^{5s}+q^{7s}}-1)(k^{q^{2s}-q^{4s}}+1)\\
       &b^{q^{5s}}=-\frac{c}{h^{1-q^{5s}}T}(k^{q^s}+k^{q^{7s}})(k^{q^{2s}}+k)\\
       &b^{q^{7s}}=\frac{c}{h^{1-q^{7s}}T}(k^{q^{3s}}+k^{q^{5s}})(k^{q^{6s}}+k),
       \end{split}
\end{equation}
where $T=\mathrm{Tr}_{q^8/q}(k^{1+q^s}+k^{1+q^{3s}})$. By computing the ratio between second and first component of ${\bf b}$ we have
\begin{equation}\label{bq}
    (b^{q^s})^{q^{2s}-1}=-\frac{k^{q^{2s}-q^{4s}}+1}{k^{q^{6s}-q^{4s}}+1}=-\frac{k^{q^{2s}}+k^{q^{4s}}}{k^{q^{4s}}+k^{q^{6s}}}=-\frac1{(k^{q^{2s}}+k^{q^{4s}})^{q^{2s}-1}}.
\end{equation}
In the same way, computing the ratio between fourth  and third equation in (\ref{b-expressions_case_1}) we have
\begin{equation}\label{bq5}
\begin{split}
    (b^{q^{5s}})^{q^{2s}-1}&=-h^{q^{7s}-q^{5s}}\frac{(k^{q^{6s}}+k)(k^{q^{3s}}+k^{q^{5s}})}{(k^{q^{2s}}+k)(k^{q^s}+k^{q^{7s}})}\\
    &=-\biggl(\frac{h^{q^{5s}}k^{q^{s}}}{k^{q^{6s}}+k}\biggr)^{q^{2s}-1}\frac{(k^{q^{5s}-q^{3s}}+1)}{(k^{q^{7s}-q^s}+1)}=-\biggl(\frac{h^{q^{5s}}k^{q^{s}}}{k^{q^{6s}}+k}\biggr)^{q^{2s}-1}.
    \end{split}
\end{equation}
Let $\omega$ be an element in $\F_{q^8}$ such that $\omega^{q^{2s}-1}=-1$; then, from the last two Expressions (\ref{bq}) and (\ref{bq5}) we get
\begin{equation}\label{solutions}
b^{q^s}=\frac{\lambda \omega}{k^{q^{2s}}+k^{q^{4s}}}  \,\,\,\text{ and } \,\,\, b^{q^{5s}}=\frac{h^{q^{5s}}k^{q^{s}}\mu \omega}{k^{q^{6s}}+k},
\end{equation} respectively,
where $\lambda,\mu \in \F_{q^2}$.
Raising the two sides of the first equation in (\ref{solutions}) to the $q^{4s}$ power, and taking into account the second equation, we have
that $h^{q^{5s}}k^{q^{s}}$ must also belong to $\F_{q^2}$. In other words we get

\begin{equation}\label{eq:l=s_cond_on_h_and_k}
h^{q^{5s}}=\nu k^{-q^{s}}, \end{equation} for some $\nu \in \F^*_{q^2}$.\\
Now, taking into account Equations  \eqref{solutions}, recalling that $\omega^{q^{2s}}=-\omega$ and that $\lambda \in \F_{q^2}$, the first equation of the Linear System \eqref{linear-system} becomes
\begin{equation}
    \frac1{k^{q^{2s}}+k^{q^{4s}}} -\frac{k^{q^{3s}-q^{2s}}}{k^{q^{4s}}+k^{q^{6s}}}+\frac{h^{1-q^{5s}}k^{q^{5s}-q^{2s}}}{k^{q^{6s}}+k}-\frac{h^{1-q^{7s}}}{k+k^{q^{2s}}}=0.
\end{equation}

Plugging (\ref{eq:l=s_cond_on_h_and_k}), latter equation reads
\begin{equation}
    \frac1{k^{q^{2s}}+k^{q^{4s}}} -\frac{k^{q^{3s}-q^{2s}}}{k^{q^{4s}}+k^{q^{6s}}}=\nu^{q^s-1}\biggl(\frac{k^{-q^{4s}-q^{2s}}}{k^{q^{6s}}+k}+\frac{k^{q^{3s}-q^{4s}}}{k+k^{q^{2s}}}\biggr),
\end{equation}

whence
\begin{equation}\label{solutions_2}
    \frac1{k^{q^{2s}}+k^{q^{4s}}} -\frac{k^{q^{3s}}}{k^{q^{4s}+q^{2s}}-1}=-\nu^{q^s-1}\biggl(\frac1{k^{q^{2s}}+k^{q^{4s}}} -\frac{k^{q^{3s}}}{k^{q^{4s}+q^{2s}}-1}\biggr).
\end{equation}

We observe here that the term in the left side of Equation (\ref{solutions_2}), can not be equal to zero. Indeed, if it was  \[\frac1{k^{q^{2s}}+k^{q^{4s}}} -\frac{k^{q^{3s}}}{k^{q^{4s}+q^{2s}}-1}=0; \]
then, raising this equation to the $(q^{4s}+1)$-th power, and taking into account that $k^{q^{4s}+1}=-1$, we would get 
\begin{equation}\label{itcanbezero}
(k^{q^{2s}}-k^{q^{6s}})(k-k^{q^{4s}})=0,
\end{equation}
and so $k \in \F_{q^4}$. In such a case, the condition $k^{1+q^{4s}}=-1$ implies that $k \in \F_{q^2}$. Then, by \eqref{bq}, $b^{q^s}=\lambda \omega$ where  $\omega^{q^{2s}-1}=-1$. Using the expression of $b^{q^{5s}}$ found in \eqref{bq5}, we get that $h \in \F_{q^2}$ as well. Then, first and third equation of Linear System \eqref{linear-system}, can be rewritten in the following form
\begin{equation}
\begin{cases}
(1-k^{{q^s}-1})(1-h^{1-q^s})=0\\
(1+k^{{q^s}-1})(1+h^{1-q^{s}})=0.
\end{cases}
\end{equation}
These two equations are verified together only when either $k \in \F_q$ and $h \in \F_{q^2}\setminus \F_q$ with $h^{q^s-1}=-1$ or $h \in \F_q$ and $k \in \F_{q^2}\setminus \F_q$ with $k^{q^s-1}=-1$. However, both when $q \equiv 1 \pmod 4$ and $q \equiv 3 \pmod 4$, it is straightforward to see that these cases cannot occur.
Then, the expression  in \eqref{itcanbezero} is not zero and hence, $\nu^{q^s-1}=-1$. On the other hand, since $\nu \in \mathbb{F}_{q^2},$ we also have $k^{1+q^{4s}}=\frac{\nu^2}{h^{1+q^{4s}}}$, whence $\nu^2=1$, a contradiction. \\

\noindent \textbf{Case 2.} We assume $\ell \equiv -s \pmod 8$. If 
$d\psi_{k,4,-s}(x)=\psi_{h,4,s}(ax),$
by equating the coefficients we obtain
\begin{equation}\label{a-d-system}
\begin{cases}
dk^{1-q^s}=a^{q^s}\\
-dk^{1-q^{3s}}=a^{q^{3s}}\\
d=-a^{q^{5s}}h^{1-q^{5s}}\\
d=a^{q^{7s}}h^{1-q^{7s}}.
\end{cases}   
\end{equation}
From the first and second equation, we get
\begin{equation}
    (a^{q^s})^{q^{2s}-1}=-\left (\frac1{k^{q^s}} \right )^{q^{2s}-1}
\end{equation}
and hence
\begin{equation}\label{aqs}
  a^{q^s}=\frac{\lambda \omega}{k^{q^s}}  
\end{equation}
with $\omega^{{q^{2s}-1}}=-1$ and $\lambda \in \F_{q^2}^*$; whence, $d=\frac{\omega \lambda}{k}$. On the other hand, from the third and the last equation, we get
\begin{equation}
    (a^{q^{5s}})^{q^{2s}-1}=-(h^{q^{5s}})^{q^{2s}-1}
\end{equation}
and hence 
\begin{equation}\label{aq5s}
    a^{q^{5s}}=\mu \omega h^{q^{5s}}
\end{equation}
with $\mu \in \F_{q^2}^*$ and so $d=-\omega \mu h$. Raising to the $q^{4s}$-power the Formula \eqref{aqs} and getting it equals to \eqref{aq5s} , we obtain
\begin{equation}\label{nu}
   h^{q^{5s}}k^{q^{5s}}=\frac{\lambda}{\mu}=\nu 
\end{equation}
In the same way, by the expressions of $d$, we get
\begin{equation}\label{-nu}
    hk=-\nu.
\end{equation}
By assumption on $h$ and $k$, and since $\nu \in \F_{q^2},$ we have that  $\nu^2=(hk)(hk)^{q^{4s}}=1$; then, $\nu= \pm 1$. Hence, by \eqref{nu} and \eqref{-nu}

\[
\pm 1=(hk)^{q^{5s}}=hk= \mp 1,
\]
a contradiction.
Then, this case cannot occur.\\
If otherwise $\psi_{h,s}(b\psi_{k,-s}(x))=cx$ then, expanding left side of this equation, we get 

\begin{align*}
& \left (b^{q^s}  -
 b^{q^{3 s}} k^{ q^{3 s}-q^{6s}}-
  b^{q^{5 s}} h^{1 -q^{5s}} k^{q^{5 s} - q^{6 s}} +
 b^{q^{7 s}} h^{1 - q^{7 s}}     
\right) x^{q^{6 s}}\\
&+\left (b^{q^s}  + b^{q^{3 s}}  + 
 b^{q^{5 s}} h^{1 - q^{5s}} k^{q^{5 s}-1}  + 
 b^{q^{7 s}} h^{1 - q^{7 s}} k^{q^{7 s}-1}\right )x\,
 \\
 +& \left ( b^{q^s} k^{q^s - q^{2 s}}+b^{q^{3 s}}  -
 b^{q^{5 s}} h^{1 - q^{5s}}   - b^{q^{7 s}} h^{1 - q^{7 s}} k^{q^{7 s}-q^{2s}} \right) x^{q^{2 s}}  \\
 + & \left(- b^{q^s} k^{ q^s-q^{4s}}+ 
 b^{q^{3 s}} k^{q^{3 s} - q^{4 s}}-b^{q^{5 s}} h^{1 - q^{5s}}  + 
 b^{q^{7 s}} h^{1 - q^{7 s}}     \right) x^{q^{4 s}} =cx.
	\end{align*}

Comparing the coefficients of $x, x^{q^{2s}}, x^{q^{4s}},$ and $x^{q^{6s}}$ on the left and right side, and taking into account that $h^{q^{4s}+1}=k^{q^{4\ell}+1}=-1$, we obtain the following linear system in $b^{q^s},b^{q^{3s}},b^{q^{5s}},b^{q^{7s}}$:
\begin{equation}\label{l=-1}
\begin{pmatrix}
    1 & -k^{q^{3s}-q^{6s}} & -h^{1-q^{5s}}k^{q^{5s}-q^{6s}}& h^{1-q^{7s}} \\
    1 & 1 & k^{q^{5s}-1}h^{1-q^{5s}} & h^{1-q^{7s}}k^{q^{7s}-1} \\
    k^{q^s-q^{2s}} & 1 &  -h^{1-q^{5s}} & -h^{1-q^{7s}}k^{q^{7s}-q^{2s}}\\
    -k^{q^s-q^{4s}} & k^{q^{3s}-q^{4s}} & -h^{1-q^{5s}} & h^{1-q^{7s}}
    \end{pmatrix}
    \begin{pmatrix}
    b^{q^s} \\
    b^{q^{3s}} \\
    b^{q^{5s}} \\
    b^{q^{7s}}
    \end{pmatrix}=
    \begin{pmatrix}
    0 \\
    c \\
    0 \\
    0 \\
    \end{pmatrix}.
\end{equation}

We denote by $A_{h,k,-s}$ the coefficient matrix associated with this system. It is not difficult to see that ${\rm det} (A_{h,k,-s})={\rm det} (A_{h,k,s})$, hence it is not zero and there is a unique solution ${\bf b}=\big (b^{q^s},b^{q^{3s}},b^{q^{5s}},b^{q^{7s}}\big ) \in \mathbb{F}_{q^8}^4$ of System \eqref{l=-1} with components:
\begin{equation}\label{b-expressionsl=-1}
\begin{split}
    &b^{q^s}=\frac{c}{T}(k^{q^{2s}}+k)(k^{q^{3s}}+k^{q^{5s}})\\
    &b^{q^{3s}}=\frac{c}{T}(k^{q^{6s}}+k)(k^{q^s}+k^{q^{7s}})\\
    &b^{q^{5s}}=-\frac{c}{h^{1-q^{5s}}T}(k^{q^{6s}+1}-1)(k^{q^{3s}+q^{s}}-1)\\
    &b^{q^{7s}}=-\frac{c}{h^{1-q^{7s}}T}(k^{q^{2s}+1}-1)(k^{q^{3s}+q^{s}}-1)
 \end{split}
\end{equation}
Computing the ratio between second and first equation in (\ref{b-expressionsl=-1}), we have
\begin{equation*}
\begin{split}
    (b^{q^{s}})^{q^{2s}-1}&=\frac{(k^{q^{6s}}+k)(k^{q^s}+k^{q^{7s}})}{(k^{q^{2s}}+k)(k^{q^{3s}}+k^{q^{5s}})}=\frac{(k^{q^{6s}}+k)k^{q^s}(1+k^{q^{7s}-q^s})}{(k^{q^{2s}}+k)k^{q^{3s}}(1+k^{q^{5s}-q^{3s}})}=\biggl(\frac{k^{q^{5s}}}{k^{q^{6s}}+k}\biggr)^{q^{2s}-1}.
    \end{split}
\end{equation*}
Similarly, dividing third equation by the fourth, we get
\begin{equation*}
    \begin{split}
        (b^{q^{5s}})^{q^{2s}-1}&=h^{q^{7s}-q^{5s}}\frac{k^{q^{2s}}+k^{q^{4s}}}{k^{q^{6s}}+k^{q^{4s}}}=\biggl(\frac{h^{q^{5s}}}{k^{q^{4s}}+k^{q^{2s}}}\biggr)^{q^{2s}-1}.
    \end{split}
\end{equation*}
The last two expressions together imply that
\begin{equation}\label{solutions2}
\begin{split}
      b^{q^s}&= \lambda \frac{k^{q^{5s}}}{k^{q^{6s}}+k}\\
      b^{q^{5s}}&=\mu\frac{h^{q^{5s}}}{k^{q^{4s}}+k^{q^{2s}}}
      \end{split}
\end{equation}
with $\lambda,\mu \in \F_{q^2}$. 
Now, from the two equations in (\ref{solutions2}) we easily get $\frac{h^{q^{5s}}}{k^{q^{s}}}=\frac{\lambda}{\mu}= \nu  $, where $\nu \in \F^*_{q^2}$. As before,
\begin{equation*}
    \nu^2=\frac{h^{q^{5s}}}{k^{q^{s}}}\cdot \frac{h^{q^{s}}}{k^{q^{5s}}}=1.
\end{equation*}
Then $\nu=  \pm 1$, which in turn implies that $h = \pm k^{-1}$, and the following values for $b$ and $c$:
\begin{equation*}
b= \frac{\lambda^{q^s}k^{q^{4s}}}{k^{q^{7s}}+k^{q^{5s}}}, \quad \quad c=\lambda \left(  k^{q^{7 s}-1} +  k^{q^{5s}-1} \right).
\end{equation*}
This concludes this case.\\ 

\noindent \textbf{Case 3.} We assume now $\ell \equiv 3s \pmod 8 $. If $d\psi_{k,4,3s}(x)=\psi_{h,4,s}(ax)$, by expanding and comparing the coefficients, we get the following set of conditions
 \begin{equation}
 \begin{cases}
     d=a^{q^s}\\
     d=a^{q^{3s}}\\
     dk^{1-q^{5s}}=-a^{q^{5s}}h^{1-q^{5s}}\\
     -dk^{1-q^{7s}}=a^{q^{7s}}h^{1-q^{7s}}.
 \end{cases}
 \end{equation}
 By the first and the second, $d=a^{q^s}$ with $a \in \F_{q^2}$ and plugging this into the third equation, we obtain 
 \begin{equation}
     \left (\frac{h}{k}\right )^{q^{5s}-1} =-1.
 \end{equation}
Then, $h=\omega k$, where $\omega \in \F_{q^2}$ such that $\omega^{{q^s}-1}=-1$. Since on the other hand we also have $\omega^2=(h/k)^{1+q^{4s}}=1$, this leads to a contradiction; hence, this case cannot occur.\\
Assume that  $\psi_{h,4,s}(b\psi_{k,4,3s}(x))=cx$. Expanding this equation we get 
	\begin{align*}
	&\left (b^{q^s} -k^{ q^{3 s}-q^{2s}}b^{q^{3 s}}  -
 b^{q^{5 s}} h^{1 - q^{5s}} k^{q^{5 s} - q^{2 s}}  + 
 b^{q^{7 s}} h^{1 - q^{7 s}}\right )x^{q^{2s}}\,
 \\
 +& \left (b^{q^s} +b^{q^{3 s}}  + 
 b^{q^{5 s}} h^{1 - q^{5s}} k^{q^{5s} -q^{4s}}  +  b^{q^{7 s}} h^{1 - q^{7 s}} k^{q^{7 s} - q^{4 s}} \right) x^{q^{4 s}}  \\
 + & \left(b^{q^s} k^{q^s - q^{6s}}  +b^{q^{3 s}} - 
 b^{q^{5 s}} h^{1 - q^{ 5s}}-  
 b^{q^{7 s}} k^{q^{7 s} - q^{6 s}}h^{1 - q^{ 7s}} \right) x^{q^{6 s}}
 \\
 + & \left (-b^{q^s}k^{q^{ s} - 1}  + 
 b^{q^{3 s}} k^{ q^{3 s}-1}  -
 b^{q^{5 s}}h^{1 - q^{5s}} + 
 b^{q^{7 s}} h^{1 - q^{7s}} \right) x=cx.
	\end{align*}
Now, by comparing the coefficients of $x, x^{q^{2s}}, x^{q^{4s}},$ and $x^{q^{6s}}$ on the left and right side, the equation above gives the following linear system in $b^{q^s},b^{q^{3s}},b^{q^{5s}},b^{q^{7s}}$:
\begin{equation}\label{linear-system3}
    \begin{pmatrix}
    1 & -k^{q^{3s}-q^{2s}} & -h^{1-q^{5s}}k^{q^{5s}-q^{2s}}& h^{1-q^{7s}} \\
    1 & 1 & k^{q^{5s}-q^{4s}}h^{1-q^{5s}} & h^{1-q^{7s}}k^{q^{7s}-q^{4s}} \\
    k^{q^s-q^{6s}} & 1 &  -h^{1-q^{5s}} & -h^{1-q^{7s}}k^{q^{7s}-q^{6s}}\\
    -k^{q^s-1} & k^{q^{3s}-1} & -h^{1-q^{5s}} & h^{1-q^{7s}}
    \end{pmatrix}
    \begin{pmatrix}
    b^{q^s} \\
    b^{q^{3s}} \\
    b^{q^{5s}} \\
    b^{q^{7s}}
    \end{pmatrix}=
    \begin{pmatrix}
    0 \\
    0 \\
    0 \\
    c \\
    \end{pmatrix}.
\end{equation}
The determinant of the matrix $A_{h,k,3s}$ associated with linear system (\ref{linear-system3}) is equal to
\begin{equation*}
\mathrm{Tr}_{q^8/q}(k^{q
    ^s+1}+k^{q^{3s}+1})h^{2 - q^{5s} - q^{7s}}
\end{equation*}
and hence it is different from zero.

Since $c \not = 0$ then, the unique solution of System  \eqref{linear-system3} has components
\begin{equation}\label{b-expressions}
\begin{split}
    &b^{q^s}=-\frac{c}{T}(k^{q^{5s}+q^{7s}}-1)(k^{q^{6s}-q^{4s}}+1)\\
       &b^{q^{3s}}=\frac{c}{T}(k^{q^{5s}+q^{7s}}-1)(k^{q^{2s}-q^{4s}}+1)\\
       &b^{q^{5s}}=-\frac{c}{h^{1-q^{5s}}T}(k^{q^s}+k^{q^{7s}})(k^{q^{2s}}+k)\\
       &b^{q^{7s}}=\frac{c}{h^{1-q^{7s}}T}(k^{q^{3s}}+k^{q^{5s}})(k^{q^{6s}}+k),
       \end{split}
\end{equation}
which is the same  solution obtained for System (\ref{linear-system}) in the case where $\ell\equiv s\pmod{8}$. Then arguing as in that case, we get $h^{q^{5s}}=\nu k^{-q^{s}}$ and $b^{q^s}=\frac{\lambda \omega}{k^{q^{2s}}+k^{q^{4s}}}$ for some $\nu, \lambda \in \F^*_{q^2}$ and  $\omega\in\F_{q^8}$ such that $\omega^{q^{2s}-1}=-1$. Since $1=(h^{q^{5s}}k^{q^s})^{1+q^{4s}}=\nu^2$, then $\nu=\pm1$  and so $h=\pm k$,
\begin{equation*}
    b=-\frac{\lambda^{q^s}\omega^{q^s}}{k^{q^{3s}}+k^{q^s}} \,\textnormal{ and } \,c=\lambda \omega (k^{q^{s}} + k^{q^{3s}} ).
    \end{equation*}
\noindent \textbf{Case 4.} Finally, we assume $\ell \equiv 5s \pmod 8 $. If 
$d\psi_{k,4,5s}(x)=\psi_{h,4,s}(ax)$
by equating the coefficients we obtain the following conditions:
\begin{equation}\label{a-d-system1}
\begin{cases}
-dk^{1-q^s}=a^{q^s}\\
dk^{1-q^{3s}}=a^{q^{3s}}\\
d=-a^{q^{5s}}h^{1-q^{5s}}\\
d=a^{q^{7s}}h^{1-q^{7s}}.
\end{cases}   
\end{equation}
Proceeding as we did in the case where $s\equiv -\ell\pmod{8}$, we get $h=\pm k^{-1}$, $a=- \lambda^{q^s}\omega^{q^s}k^{-1}$ and $d=-\frac{\lambda \omega}{k}$ where $\lambda \in \F_{q^2}$ and $\omega\in \F_{q^8}$ such that $\omega^{{q^{2s}}}=-\omega$.

If instead we assume $\psi_{h,4,s}(b\psi_{k,4,5s}(x))=cx$, then, always by expanding this equation, we get

	\begin{align*}
 & \left (b^{q^s}  + b^{q^{3 s}} k^{q^{3 s} - q^{6 s}}  + b^{q^{5 s}} h^{1 - q^{5s}} k^{q^{5 s} + q^{2 s}}+
 b^{q^{7 s}} h^{1 - q^{7 s}}\right) x^{q^{6 s}}\\
 +&\left (b^{q^s}  + b^{q^{3 s}}  - 
 b^{q^{5 s}} h^{1 - q^{5s}} k^{ q^{5 s}-1}  - 
 b^{q^{7 s}} h^{1 - q^{7 s}} k^{q^{7 s}-1}\right )x\\
	&+ \left (-b^{q^s} k^{q^s - q^{2 s}}  +b^{q^{3 s}}  - 
 b^{q^{5 s}} h^{1 - q^{5s}}  +  b^{q^{7 s}} h^{1 - q^{7 s}} k^{q^{7 s} - q^{2 s}} \right) x^{q^{2 s}}  \\
 + & \left(b^{q^s} k^{q^s-q^{4s}}  -  
 b^{q^{3 s}} k^{q^{3 s} -q^{4s}}-b^{q^{5 s}} h^{1 - q^{5s}}  + 
 b^{q^{7 s}} h^{1 - q^{7 s}}   \right) x^{q^{4 s}}=cx.
	\end{align*}
	
Now, comparing the coefficients of $x, x^{q^{2s}}, x^{q^{4s}},$ and $x^{q^{6s}}$ on the left and right side of this equation, and taking into account that $h^{q^{4s}+1}=k^{q^{4\ell}+1}=-1$, we obtain the following linear system in $b^{q^s},b^{q^{3s}},b^{q^{5s}},b^{q^{7s}}$:
\begin{equation}\label{l=5}
\begin{pmatrix}
 1 & k^{q^{3s}-q^{6s}} & h^{1-q^{5s}}k^{q^{5s}-q^{6s}}& h^{1-q^{7s}} \\
 1 & 1 & -k^{q^{5s}-1}h^{1-q^{5s}} & -h^{1-q^{7s}}k^{q^{7s}-1} \\
     -k^{q^s-q^{2s}} & 1 &  -h^{1-q^{5s}} & h^{1-q^{7s}}k^{q^{7s}-q^{2s}}\\
      k^{q^s-q^{4s}} & -k^{q^{3s}-q^{4s}} & -h^{1-q^{5s}} & h^{1-q^{7s}}\\

    \end{pmatrix}
    \begin{pmatrix}
    b^{q^s} \\
    b^{q^{3s}} \\
    b^{q^{5s}} \\
    b^{q^{7s}}
    \end{pmatrix}=
    \begin{pmatrix}
    0 \\
    c\\
    0 \\
    0 \\
    \end{pmatrix}
\end{equation}

As in the previous cases the determinant of the matrix $A_{h,k,5s}$ associated with relevant linear system (\ref{l=5}) is equal to $-\mathrm{Tr}_{q^8/q}(k^{q
    ^s+1}+k^{q^{3s}+1})h^{2 - q^{5s} - q^{7s}}$, and hence it is different from zero. 
Since $c \not = 0$ then, the unique solution of System  \eqref{l=5} is
\begin{equation}\label{b-expressions5}
\begin{split}
    &b^{q^s}=\frac{c}{T}(k^{q^{2s}}+k)(k^{q^{3s}}+k^{q^{5s}})\\
    &b^{q^{3s}}=\frac{c}{T}(k^{q^{6s}}+k)(k^{q^s}+k^{q^{7s}})\\
    &b^{q^{5s}}=-\frac{c}{h^{1-q^{5s}}T}(k^{q^{6s}+1}-1)(k^{q^{3s}+q^{s}}-1)\\
    &b^{q^{7s}}=-\frac{c}{h^{1-q^{7s}}T}(k^{q^{2s}+1}-1)(k^{q^{3s}+q^{s}}-1).
 \end{split}
\end{equation}
These are the same values as found in the case where $s\equiv -\ell\pmod{8}$ (see Case 2, Equations (\ref{b-expressionsl=-1})). Hence, arguing as in that case, we get $b^{q^s}= \lambda \frac{k^{q^{5s}}}{k^{q^{6s}}+k}$ and

\begin{equation}\label{eq:l=5s_cond_on_h_and_k}
h^{q^{5s}}=\nu{k^{q^{s}}}, \end{equation} where $\nu \in \F^*_{q^2}.$
 Substituting the values of $b^{q^s},$ $b^{q^{3s}},$ $
b^{q^{5s}}$ and $b^{q^{7s}}$ expressed in (\ref{b-expressions5}), in the first equation of linear System (\ref{l=5}) we get,
\begin{equation}
    \frac{k^{q^{5s}}}{k^{q^{6s}}+k}-\frac{k^{-q^{6s}}}{k+k^{q^{2s}}}+\frac{h^{1-q^{5s}}k^{q^{5s}-q^{6s}}k^{q^s}}{k^{q^{2s}}+k^{q^{4s}}}+\frac{h^{1-q^{7s}}k^{q^{3s}}}{k^{q^{6s}}+k^{q^{4s}}}=0.
   \end{equation}
   
   Then, taking into account Equation (\ref{eq:l=5s_cond_on_h_and_k}), we get
   \begin{equation}\label{nu_ell_cong_5s}
       \begin{split}
    &\frac{k^{q^{5s}}}{k^{q^{6s}}+k}-\frac{1}{k^{1+q^{6s}}-1}=\nu^{q^s-1}\biggl(\frac{-k^{q^{5s}-q^{6s}+q^{4s}}}{k^{q^{2s}}+k^{q^{4s}}}-\frac{k^{q^{4s}}}{k^{q^{6s}}+k^{q^{4s}}}\biggr)\\
    &\frac{k^{q^{5s}}}{k^{q^{6s}}+k}-\frac{1}{k^{1+q^{6s}}-1}=\nu^{q^s-1}\biggl(-\frac{k^{q^{5s}}}{-k^{-q^{4s}}+k^{q^{6s}}}-\frac{1}{-k^{q^{6s}+1}+1}\biggr)\\
    &\frac{k^{q^{5s}}}{k^{q^{6s}}+k}-\frac{1}{k^{1+q^{6s}}-1}=-\nu^{q^s-1}\biggl(\frac{k^{q^{5s}}}{k^{q^{6s}}+k}-\frac{1}{k^{1+q^{6s}}-1}\biggr),
       \end{split}
   \end{equation}
The left side of Equations in \eqref{nu_ell_cong_5s} cannot be zero otherwise, as it happens in the case where $\ell \equiv s \pmod 8$, a contradiction follows.
Then, $\nu^{q^s-1}=-1$ and taking the  $(1+q^{4s})$-th power of Equation (\ref{eq:l=5s_cond_on_h_and_k}), we get $\nu^2=1$. Again a contradiction. Hence, this case cannot occur and this concludes the proof. 
\end{proof}

We are now in the position to prove the following main theorem.

\begin{theorem}\label{main_result}
Let  $n=2t,$ $t \in \{3,4\}$, $h, k \in \F_{q^{n}}$ satisfying  $\N_{q^{n}/q^t}(h)=  \N_{q^n/q^t}(k)= -1$. Let $s, \ell \in \mathbb{N}$ such that $(n,s)=(n,\ell)=1$. Finally, let ${\cal C}_{h,t,s}$ and ${\cal C}_{k,t,\ell}$ be two elements in Class (\ref{neri_santonastaso_zullo}). Then: 
\begin{itemize}
    \item[(i)] ${\cal C}_{h,3,s}$ and ${\cal C} _{k,3,\ell}$ are equivalent if only if either $s \equiv \ell \pmod n$ and $h^\rho = \pm k$, or $s \equiv -\ell \pmod n$ and $h^\rho = \pm k^{-1}$ where $\rho \in \Aut(\F_{q^n})$.
\item[(ii)] ${\cal C}_{h,4,s}$ and ${\cal C} _{k,4,\ell},$ are equivalent if only if either $s \equiv \ell \pmod n $ and $h^\rho = \pm k$, or $s \equiv -\ell$ and $h^\rho= \pm k^{-1},$ or $s \equiv 3\ell \pmod n $ and $h^\rho = \pm k$, or $s \equiv 5\ell \pmod n$ and $h^\rho = \pm k^{-1}$ where $\rho \in \Aut(\F_{q^n})$.
\end{itemize}
\end{theorem}

\begin{proof} It is not difficult to show that the necessity condition in the statement of this theorem holds true. Regarding the sufficiency, we first note that since $(s,n)=(\ell,n)=1$, there exists an integer $ -1 \leq r \leq n-2$ such that $\ell \equiv s r \pmod n$. Moreover, it is straightforward to see that $r$ and $n$ must be coprime. Then, for $t=3$, $r  \not \in \{2,3,4\}$ and for $t=4$, $r \not \in \{2,4,6\}$.
Hence, one derives that if $t\in \{3,4\}$ and $r \neq  \pm1 $ or $r \neq t \pm1,$ then 
$\mathcal{C}_{h,t,s}$ and\, $\mathcal{C}_{k,t,{\ell}}$ can not be equivalent. Assume that $\mathcal{C}_{h,t,s}=\langle x,\psi_{h,t,s}(x) \rangle_{\F_{q^n}}$ and\, $\mathcal{C}_{k,t,\ell}=\langle x,\psi_{k,\ell,s}(x) \rangle_{\F_{q^n}},$ are equivalent. As a consequence of what stated above, if $t=3$ this may happen only if $r\in\{-1,1\}$; while if $t=4$ only if $r\in\{-1,1,3,5\}$. Hence, the result follows by Lemmas \ref{lm:case_t_=3} and \ref{lm:case_t_=4}.
\end{proof}

\medskip
\noindent We conclude this section pointing out that as a consequence of Theorem \ref{main_result} above, by using same argument as developed in \cite[Theorem 4.10]{neri_santonastaso_zullo}, it is easy to determine the exact number of the equivalence classes of codes in the family $\cC_{h,t,s}$ (or equivalently the number of inequivalent codes in it), also for $t \in \{3,4\}$. As in \cite[Theorem 4.12]{neri_santonastaso_zullo}, a lower bound for this number in the case $t \in \{3,4\}$, turns out to be  $\left \lfloor{\frac{\varphi(t)(q^t+ 1)}
{8rt}}  \right \rfloor$,
where $q=p^r$ and $\varphi$ is the Euler's totient function.

\section{Equivalence issue of linear sets}

In this section, we will denote by $L_{2,s}$ and $L_{2,s,\eta}$ and $L_{n,s,\delta}$ the linear sets associated with the codes ${\cal G}_{2,s}$, ${\cal H}_{2,s,\eta}$ and ${\cal K}_{n,s,\delta}$, respectively. Also, we will denote by $L_{h,t,s}$ the linear set of $\mathrm{PG}(1,q^{2t})$ associated with the code ${\cal C}_{h,t,s}$, in the following we will deal with it in the case where $t\in \{3,4\}$.

 \noindent Regarding the equivalence issue for $L_{h,3,1}$, this was completely solved in \cite{Bartoli_Zanella_Zullo}. As a consequence of Theorem \ref{main_result} Case $t=3$, we may state the following result
\begin{theorem}
If $h\in \mathbb{F}_{q^2}$, the linear set $L_{h,3,s}\subset \mathrm{PG}(1, q^6) $ is ${\PGaL}$-equivalent to some 
\[L_{\zeta}=\{\langle (x,x^{q}+x^{q^{3}}+\zeta x^{q^{5}})
\rangle_{\F_{q^6}} \colon x \in \F^*_{q^6}\}\]
where $\xi
 \in \F_{q^6}$ such that $\xi^2+\xi=1$ if and only if $h\in\mathbb{F}_q$ and $q$ is a power of 5. If $h\notin \mathbb{F}_{q^2}$, then $L_{h,3,s}$ is not $\PGaL$-equivalent to $L_{2,s}$, $L_{2,s,\eta}$ and $L_{6,s,\delta}$ in $\mathrm{PG}(1, q^6)$.
\end{theorem}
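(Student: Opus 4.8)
The plan is to deduce the statement from Theorem \ref{main_result}(i) by reducing the general exponent $s$ to $s=1$, and then invoking the classification of $L_{h,3,1}$ obtained in \cite{Bartoli_Zanella_Zullo}. As a preliminary remark, $\psi_{h,3,1}(x)=\psi_{h,3}(x)$: from $\N_{q^6/q^3}(h)=-1$, that is $h^{q^3+1}=-1$, one gets $h^{q+q^4}=h^q(h^{q^3})^q=h^q(-h^{-1})^q=-1$, hence $h^{1+q}=-h^{1-q^4}$, which is exactly the relation making the coefficients of $x^{q^4}$ in $\psi_{h,3,1}$ and $\psi_{h,3}$ coincide. Thus $\cC_{h,3,1}=\cC_{h,3}$, a code which by the remark following \eqref{longobardi_marino_trombetti_zhou} is equivalent to the one of \cite{Bartoli_Zanella_Zullo}; so the $\PGaL$-equivalence type of $L_{h,3,1}$ is precisely the one determined in \cite{Bartoli_Zanella_Zullo}.

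For the reduction, recall that $(s,6)=1$ forces $s\equiv\pm1 \pmod 6$ and that $\cC_{h,3,s}$ depends on $s$ only modulo $6$. By Theorem \ref{main_result}(i) (with $\rho=\mathrm{id}$), $\cC_{h,3,s}$ is equivalent to $\cC_{k,3,1}$ where $k=h$ if $s\equiv 1$ and $k=h^{-1}$ if $s\equiv -1$; since $\N_{q^6/q^3}(h^{-1})=-1$ as well, $\cC_{k,3,1}$ is again a member of the family, and since equivalent codes yield $\PGaL$-equivalent linear sets, $L_{h,3,s}$ is $\PGaL$-equivalent to $L_{k,3,1}$. The key point is that the elements $\pm h^{\pm\rho}$ occurring in Theorem \ref{main_result}(i), in particular $k\in\{h,h^{-1}\}$, lie in $\F_q$ (respectively in $\F_{q^2}$) if and only if $h$ does, because $\F_q$ and $\F_{q^2}$ are the unique subfields of $\F_{q^6}$ of their orders and hence are stable under inversion, sign change, and every automorphism of $\F_{q^6}$.

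It remains to feed $L_{k,3,1}$ into \cite{Bartoli_Zanella_Zullo}. Each of $L_{2,s}$, $L_{2,s,\eta}$, $L_{6,s,\delta}$ and $L_\zeta$ lies in a $\PGaL$-closed class of linear sets — respectively those of pseudoregulus type, those associated with generalized twisted Gabidulin codes, those associated with the codes $\cK_{6,s,\delta}$, and those associated with $\mathcal{Z}_{6,\zeta}$ — so $\PGaL$-equivalence to one of them is insensitive to the parameter $s$. Hence, if $h\in\F_{q^2}$, then $L_{h,3,s}$ is $\PGaL$-equivalent to some $L_\zeta$ if and only if $L_{k,3,1}$ is, which by \cite{Bartoli_Zanella_Zullo} occurs exactly when $k\in\F_q$ and $q$ is a power of $5$, equivalently when $h\in\F_q$ and $q$ is a power of $5$; and conversely in that case $\cC_{h,3,s}$ is equivalent to $\cC_{k,3,1}$, hence to $\mathcal{Z}_{6,\delta}$, so $L_{h,3,s}\cong_{\PGaL}L_\zeta$. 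If instead $h\notin\F_{q^2}$, then $k\notin\F_{q^2}$, in particular $k\notin\F_q$, and by \cite{Bartoli_Zanella_Zullo} the linear set $L_{k,3,1}$, hence $L_{h,3,s}$, is not $\PGaL$-equivalent to any linear set of pseudoregulus type, to the linear set of any generalized twisted Gabidulin code, or to any $L_{6,s,\delta}$; this is the second assertion.

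The only nontrivial ingredient is the input from \cite{Bartoli_Zanella_Zullo}: one needs its sharp statement at the level of linear sets, not merely the fact that the associated rank-metric code is new. If one prefers to avoid quoting that classification in full, the same obstruction can be produced through idealizers: the right idealizer of $\cC_{h,3,s}$ is isomorphic to $\F_{q^2}$, while $\cG_{2,s}$, $\cH_{2,s,\eta}$ and $\cK_{6,s,\delta}$ have right idealizers $\F_{q^6}$, $\F_q$ and $\F_{q^3}$, and since each of these families (as well as $\mathcal{Z}_{6,\zeta}$) is known to have $\GaL$-class one in $\PG(1,q^6)$, a $\PGaL$-equivalence of the associated linear sets would descend to a code equivalence violating the idealizer invariant. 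I expect this idealizer/class-one step, rather than the elementary reduction to $s=1$, to be the delicate one.
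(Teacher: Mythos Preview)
Your main argument (the first three paragraphs) is correct and is exactly the approach the paper intends: the theorem is stated without proof, preceded only by the sentence ``Regarding the equivalence issue for $L_{h,3,1}$, this was completely solved in \cite{Bartoli_Zanella_Zullo}. As a consequence of Theorem~\ref{main_result} Case $t=3$, we may state the following result.'' Your reduction of an arbitrary $s$ with $(s,6)=1$ to $s\equiv\pm1\pmod 6$, the use of Theorem~\ref{main_result}(i) to pass from $\cC_{h,3,s}$ to $\cC_{k,3,1}$ with $k\in\{h,h^{-1}\}$, and the observation that membership of $h$ in $\F_q$ or $\F_{q^2}$ is preserved under $h\mapsto h^{-1}$, are precisely the steps needed to feed everything into \cite{Bartoli_Zanella_Zullo}.

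Your final paragraph, however, contains a genuine error. The claim that the pseudoregulus family has $\GaL$-class one in $\PG(1,q^6)$ is false, and the paper itself gives the counterexample in Section~2: $f(x)=x^q$ and $g(x)=x^{q^s}$ with $s\not\equiv\pm1\pmod n$ satisfy $L_f=L_g$ while $f$ and $g$ are not $\GaL$-equivalent. So the sentence ``a $\PGaL$-equivalence of the associated linear sets would descend to a code equivalence'' is not justified by class one. The idealizer route can still be made to work for $L_{2,s}$, but only via the stronger classification that \emph{every} subspace defining a pseudoregulus linear set is $\GaL$-equivalent to some $U_{x^{q^{s'}}}$ (so that one lands in \emph{some} $\cG_{2,s'}$, all of which have right idealizer $\F_{q^6}$); that is a different input than class one. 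Since this paragraph is offered only as an alternative and your primary argument already matches the paper's, I would simply drop or rephrase this last remark.
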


Now we investigate, the same issue for $L_{h,4,s}$. In \cite{longobardi_zanella}, by using projection techniques, the authors solved the problem for the linear set associated with the code $\mathcal{C}_t$, showing that this is not equivalent to any other linear sets previously constructed.\\
Following the same approach, we will show the same result for any value of the parameter $h$.
We will work in the following framework. Let $x_0, x_1, \dots, x_7$ be the homogeneous coordinates of $\mathrm{PG}(7,q^8)$ and consider $$\Sigma=\{\langle(x,x^q, \dots,x^{q^7})\rangle_{\mathbb{F}_{q^8}}:x\in \mathbb{F}_{q^8}^*\}$$ to be 
a fixed (canonical) subgeometry of $\mathrm{PG}(7,q^8)$. Let $$\hat{\sigma}: \langle( x_0, x_1, \dots, x_7)\rangle_{\mathbb{F}_{q^8}}\in \mathrm{PG}(7,q^8)\longrightarrow\langle(x^q_7, x^q_0, x^q_1, \dots, x^q_6)\rangle_{\mathbb{F}_{q^8}}\in \mathrm{PG}(7,q^8)$$
be a collineation of $\mathrm{PG}(7,q^8)$. It is straightforward to see that $\hat{\sigma}^m$ fixes $\Sigma$ pointwise for any $m\in\{0,\dots,7\},$ and $\hat{\sigma}^u$ is a generator of $\mathrm{Fix}(\Sigma)$ for any $u$ such that $(u,8)=1$.

As proven in \cite[Theorems 1 and 2]{lunardon_polverino} each linear set $L$  of $\PG(1,q^8)$ can be obtained by projecting $\Sigma$ from a $5$-dimensional projective subspace $\Gamma$ of $\PG(7,q^8)$ onto a line of $\PG(7,q^8)$ which is disjoint from $\Gamma$.  

By \cite[Theorem 1.1 and 3.5]{Zanella-Zullo}, as done in \cite[Result 3.4]{Bartoli_Zanella_Zullo}, we can assert that if $L$ is scattered and the least positive integer $\gamma$ such $$\dim(\Gamma \cap \Gamma^{\hat{\sigma}} \cap \cdots \cap \Gamma^{\hat{\sigma}^\gamma}) > 5-2\gamma,$$ does not belong to $\{1,2\}$, then $L$ is neither equivalent to $L_{2,s}$ nor to $L_{2,s,\eta}$ (for more details see also \cite{longobardi_zanella}). With this in mind, we prove the following 
\begin{theorem}\label{notLPtype}
The linear set $L_{h,4,s}$ is neither equivalent to $L_{2,s}$ nor to $L_{2,s,\eta}$.
\end{theorem}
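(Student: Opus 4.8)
\textbf{Proof proposal for Theorem \ref{notLPtype}.}

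The plan is to realize the linear set $L_{h,4,s}$ as a projection of the subgeometry $\Sigma$ from a suitable $5$-dimensional subspace $\Gamma$ of $\PG(7,q^8)$, and then to compute the ``first jump'' integer $\gamma$, i.e.\ the least positive integer for which $\dim(\Gamma\cap\Gamma^{\hat\sigma}\cap\cdots\cap\Gamma^{\hat\sigma^\gamma})>5-2\gamma$, showing that $\gamma\notin\{1,2\}$. By the criterion quoted above (\cite[Theorem 1.1 and 3.5]{Zanella-Zullo}, see also \cite{longobardi_zanella}), this immediately yields that $L_{h,4,s}$ is neither $\PGaL$-equivalent to $L_{2,s}$ nor to $L_{2,s,\eta}$. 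First I would reduce to the case $s=1$: since replacing $s$ by $1$ corresponds to replacing the generator $\hat\sigma$ of $\mathrm{Fix}(\Sigma)$ by $\hat\sigma^{u}$ with $(u,8)=1$, and the defining condition on $\gamma$ is insensitive to this choice (it only reshuffles the exponents), it suffices to treat $\psi_{h,4,1}(x)=x^{q}+x^{q^{3}}+h^{1+q}x^{q^{5}}+h^{1-q^{7}}x^{q^{7}}$.

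Next I would write down $\Gamma$ explicitly. The subspace $U_{\psi_{h,4,1}}=\{(x,\psi_{h,4,1}(x)):x\in\F_{q^8}\}$ determines, via the standard correspondence in \cite{lunardon_polverino}, a $5$-space $\Gamma=\langle P_0,\dots,P_5\rangle$ in $\PG(7,q^8)$ whose points can be read off from the linearized coefficients: concretely, $\Gamma$ is the subspace of $\PG(7,q^8)$ defined by the two linear equations obtained from the relation $x_{\psi}=x^{q}+x^{q^{3}}+h^{1+q}x^{q^{5}}+h^{1-q^{7}}x^{q^{7}}$ (one equation identifying the ``$x$''-coordinate, one identifying the image coordinate), so that $\Gamma$ is cut out in the hyperplane at infinity of the relevant chart by a single linear form with support on the indices $\{1,3,5,7\}$. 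Then $\Gamma^{\hat\sigma}$ is obtained by shifting indices by one and applying Frobenius to the coefficients, and similarly for higher powers. The computation of $\dim(\Gamma\cap\Gamma^{\hat\sigma})$ and $\dim(\Gamma\cap\Gamma^{\hat\sigma}\cap\Gamma^{\hat\sigma^2})$ then amounts to determining the rank of an explicit $2\times 8$, resp.\ $3\times 8$, matrix over $\F_{q^8}$ whose nonzero entries are monomials in $h$ and its conjugates. I would show that for $\gamma=1$ the intersection has dimension exactly $3=5-2$ (not $>3$) and that for $\gamma=2$ it has dimension exactly $1=5-4$ (not $>1$), the key point being that the relevant $2\times2$, resp.\ $3\times3$, minors are nonzero; these nonvanishing statements reduce to inequalities of the form $h^{q^i+q^j}\neq h^{q^k+q^l}$ and $h^{q^4+1}=-1$, which are handled exactly as the analogous determinant computations in the proof of Lemma \ref{lm:case_t_=4} (using \cite[Proposition 3.2]{longobardi_marino_trombetti_zhou}).

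I expect the main obstacle to be the bookkeeping in the $\gamma=2$ step: one must verify that the triple intersection does not collapse further than the generic value, and this requires checking that a specific $3\times 3$ determinant in the conjugates of $h$ is nonzero for \emph{every} admissible $h$ (i.e.\ every $h$ with $\N_{q^8/q^4}(h)=-1$), not merely generically; the danger is a sporadic vanishing for special $h$ (as happens, e.g., when $h\in\F_q$ in related families). I would handle this by raising the putative vanishing relation to the $(q^4+1)$-th power and using $h^{q^4+1}=-1$ to force $h$ into a small subfield, then ruling out that subcase directly by a parity/trace argument as in the ``$\det(A_{h,k,s})\ne 0$'' part of Lemma \ref{lm:case_t_=4}. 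Once $\gamma\notin\{1,2\}$ is established, the conclusion is immediate from the cited criterion, and no further comparison with $L_{8,s,\delta}$ is needed for this particular statement.
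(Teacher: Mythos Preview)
Your approach is essentially the same as the paper's: realize $L_{h,4,s}$ as a projection of $\Sigma$ from a $5$-space $\Gamma$ and verify $\gamma\notin\{1,2\}$ via the Zanella--Zullo criterion. Two small corrections: (i) each $\Gamma^{\hat\sigma^i}$ is cut out by \emph{two} equations (namely $x_i=0$ and the shifted linear form), so the double and triple intersections are governed by $4\times 8$ and $6\times 8$ systems, not $2\times 8$ and $3\times 8$; (ii) the $\gamma=2$ step is easier than you anticipate---after eliminating $x_0,x_u,x_{2u}$, only two of the remaining three linear forms share the same support $\{x_s,x_{3s},x_{5s},x_{7s}\}$, and their independence reduces to $h^{q^{2s}-1}\neq -1$ (via \cite[Proposition 3.2]{longobardi_marino_trombetti_zhou}), with no $3\times3$ determinant or $(q^4{+}1)$-th power trick needed. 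The paper also skips your reduction to $s=1$ and instead checks the intersections for every generator $\hat\sigma^u$, $u\in\{s,3s,5s,7s\}\equiv\{1,3,5,7\}\pmod 8$, which handles all $s$ uniformly.
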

\begin{proof}
 Let $h^{1+q^{4s}}=-1$ and $(s,4)=1$. The linear set $L_{h,4,s}$ can be seen as the projection of the canonical subgeometry $\Sigma$ from the  
\begin{equation}\label{system_1}
    \Gamma_s:\begin{cases}
    x_0=0\\
    x_s+x_{3s}-h^{1-q^{5s}}x_{5s}+h^{1-q^{7s}}x_{7s}=0
    \end{cases}
\end{equation}
onto the line $\mathrm{PG}(1,q^8) \subset \mathrm{PG}(7,q^8)$ with equations $x_{2s}= x_{3s}= \dots= x_{7s}=0$, where indices in system above are taken modulo $8$. 
Let \begin{equation}\label{system_2}
\Gamma_s^{\hat{\sigma}^u}:\begin{cases}
    x_{u}=0\\
    x_{s+u}+x_{3s+u}-h^{q^u-q^{5s+u}}x_{5s+u}+h^{q^u-q^{7s+u}}x_{7s+u}=0,
    \end{cases}
\end{equation}
where $u \in \{s,3s,5s,7s\} \pmod 8$. Then
\begin{equation}\label{intsystem}
    \Gamma_s \cap \Gamma_s^{\hat{\sigma}^u}:\begin{cases}
    x_0=0\\
    x_u=0\\
    x_s+x_{3s}-h^{1-q^{5s}}x_{5s}+h^{1-q^{7s}}x_{7s}=0\\
    x_{s+u}+x_{3s+u}-h^{q^u-q^{5s+u}}x_{5s+u}+h^{q^u-q^{7s+u}}x_{7s+u}=0.
    \end{cases}
\end{equation}
It is straightforward to see that, for any choice of $s$ and $u$, none of the unknowns in the third equation appears among those of the fourth one; hence, the four equations in System \eqref{intsystem} are independent. As a consequence, $\dim(\Gamma_s\cap\Gamma_s^{\hat{\sigma}^u})=3$.  Now consider, 
\begin{equation}\label{systemin3}
    \Gamma_s\cap\Gamma_s^{\hat{\sigma}^u}\cap\Gamma_s^{\hat{\sigma}^{2u}}:\begin{cases}
    x_0=0\\
    x_{u}=0\\
    x_{2u}=0\\
    x_s+x_{3s}-h^{1-q^{5s}}x_{5s}+h^{1-q^{7s}}x_{7s}=0\\
    x_{s+u}+x_{3s+u}-h^{q^u-q^{5s+u}}x_{5s+u}+h^{q^u-q^{7s+u}}x_{7s+u}=0\\
    x_{s+2u}+x_{3s+2u}-h^{q^{2u}-q^{5s+2u}}x_{5s+2u}+h^{q^{2u}-q^{7s+2u}}x_{7s+2u}=0.
    \end{cases}
\end{equation}
As before, we note that for any $u\in\{s,3s, 5s,7s\}$:
$$\{s+2u, 3s+2u, 5s+2u, 7s+2u\}=\{s,3s,5s,7s\} $$ and 
$$\{s+u,3s+u,5s+u, 7s+u\} = \{0,2s,4s,6s\},$$
where the integers are taken modulo $8$. So the fourth and sixth equations in System \eqref{systemin3} have the same unknowns; in addition none of the unknowns of the fifth equation is among these.
So, suppose that $u\equiv s \pmod 8$, then \eqref{systemin3} becomes
\begin{equation}\label{systemin3u=s}
    \Gamma_s\cap\Gamma_s^{\hat{\sigma}^u}\cap\Gamma_s^{\hat{\sigma}^{2u}}:\begin{cases}
    x_0=0\\
    x_{s}=0\\
    x_{2s}=0\\
    x_{3s}-h^{1-q^{5s}}x_{5s}+h^{1-q^{7s}}x_{7s}=0\\
    x_{4s}-h^{q^s-q^{6s}}x_{6s}=0\\
    x_{3s}+x_{5s}-h^{q^{2s}-q^{7s}}x_{7s}=0.
    \end{cases}
\end{equation}
By fourth and sixth equation, one gets 
$(1+h^{1-q^{5s}})x_{5s}=h^{1-q^{7s}}(1+h^{q^{2s}-1})x_{7s}$ and since $h^{q^{2s}-1} \not = -1$ ( see \cite[Proposition 3.2]{longobardi_marino_trombetti_zhou}), all equations in \eqref{systemin3u=s} are independent.
The same argument can be applied to all the other congruences leading to $\dim(\Gamma_s\cap\Gamma_s^{\hat{\sigma}^u}\cap\Gamma_s^{\hat{\sigma}^{2u}})=1$ for any $u \in \{s,3s,5s,7s\} \pmod 8$.
Hence, we have that $\gamma\notin\{1,2\}$; therefore, $L_{h,4,s}$ is neither equivalent to $L_{2,s}$, nor $L_{2,s,\eta}$. 
\end{proof}

As a consequence of this achievement we may derive the following result about equivalence of the codes $\mathcal{C}_{h,t,s}$ with ones belonging to other families.

\begin{theorem}\label{th:inequivalence}
Let  $n=2t,$ $t \in \{3,4\}$, $h, k,\eta \in \F_{q^{n}}$ satisfying $\N_{q^{n}/q^t}(h)=  \N_{q^n/q^t}(k)= -1$ and $\N_{q^n/q}(\eta) \not = 1$. Let $s \in \mathbb{N}$ such that $(n,s)=1$.  

\begin{itemize}
\item[(a)] ${\mathcal H}_{2,s}(\eta)$ and \,$\mathcal{C}_{h,t,s}$ are not equivalent. 
\item[(b)] Assume $\delta \in \F_{q^{2t}}$ such that $\N_{q^n/q^{n/2}}(\delta) \not \in \{0,1\},$ and the other conditions on $\delta$ and $q$ as expressed in \cite[Section 7]{Csajbok_Marino_Polverino_Zanella}, hold true. Then, the codes ${\cal K}_{2t,s,\delta}$ and $\mathcal{C}_{h,t,s}$ are not equivalent.
\item [(c)] Assume $\zeta \in \F_{q^{6}}$ such that $\zeta^2+\zeta=1$. Then, the codes $\mathcal{Z}_{6,\zeta}$ and $\cC_{h,3,s}$  are not equivalent except $h \in \F_{q}$ and $q$ a power of $5$.
\end{itemize}

\end{theorem}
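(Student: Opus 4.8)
\emph{Approach.} The plan is to dispatch the three items with two different invariants: part~(b) by a right‑idealizer computation, and parts~(a) and~(c) by passing to the associated linear sets and using that equivalent codes have $\PGaL$‑equivalent linear sets.

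\emph{Part (b).} The code $\cK_{2t,s,\delta}=\langle x,\,\delta x^{q^s}+x^{q^{s+t}}\rangle$ is already written by means of a $q$‑polynomial in standard form: the two exponents $s$ and $s+t$ carrying a nonzero coefficient differ by $t\pmod{2t}$, so $m_F=\gcd(t,2t)=t>1$. Hence Theorem~\ref{t:sf} applies and yields $I_R(\cK_{2t,s,\delta})=\{\alpha x:\alpha\in\F_{q^t}\}\cong\F_{q^t}$, whereas $I_R(\cC_{h,t,s})\cong\F_{q^2}$ by the facts recalled in Section~2. For $t\in\{3,4\}$ one has $\F_{q^t}\not\cong\F_{q^2}$, and MRD codes with non‑isomorphic right idealizers are never equivalent; this settles~(b) at once, for both values of $t$ and for all admissible parameters.

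\emph{Parts (a) and (c).} I would argue on linear sets: an equivalence $\cH_{2,s,\eta}\sim\cC_{h,t,s}$ (resp.\ $\mathcal{Z}_{6,\zeta}\sim\cC_{h,t,s}$) would force $L_{h,t,s}$ to be $\PGaL$‑equivalent to $L_{2,s,\eta}$ (resp.\ to $L_\zeta$). For $t=4$, part~(a) is then immediate from Theorem~\ref{notLPtype}, which states $L_{h,4,s}\not\sim L_{2,s,\eta}$. For $t=3$, the theorem proved just above for $L_{h,3,s}$ in $\PG(1,q^6)$ gives both remaining claims: for $h\notin\F_{q^2}$ it asserts $L_{h,3,s}\not\sim L_{2,s,\eta}$, which is~(a); and it asserts $L_{h,3,s}\sim L_\zeta$ (with $\zeta^2+\zeta=1$) if and only if $h\in\F_q$ and $q$ is a power of $5$, which is exactly~(c), including the exceptional case (there $\cC_{h,3,s}\sim\mathcal{Z}_{6,\zeta}$ is already known) and with no restriction on $h$. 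What is left is part~(a) for $t=3$ with $h\in\F_{q^2}$: if moreover $h\in\F_q$ and $q$ is a power of $5$, then $\cC_{h,3,s}\sim\mathcal{Z}_{6,\zeta}$, and one concludes from the fact that $\mathcal{Z}_{6,\zeta}$ is not equivalent to any generalized twisted Gabidulin code (\cite{Csabok_Marino_Zullo,Marino_Montanucci_Zullo}), hence not to $\cH_{2,s,\eta}$; otherwise, for $h\in\F_{q^2}$, I would transfer the complete analysis of \cite{Bartoli_Zanella_Zullo} (which disposes of $\cC_{h,3}$, i.e.\ the case $s=1$), or, in the spirit of Theorem~\ref{notLPtype}, re‑run the projection of the canonical subgeometry of $\PG(5,q^6)$ and check that the least $\gamma$ with $\dim(\Gamma_s\cap\Gamma_s^{\hat{\sigma}}\cap\cdots\cap\Gamma_s^{\hat{\sigma}^\gamma})>5-2\gamma$ still lies outside $\{1,2\}$.

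\emph{Main obstacle.} It is precisely this last corner. When $h\in\F_{q^2}$ the polynomial $\psi_{h,3,s}$ is no longer in general position, and the extra relations it then satisfies among powers of $h$ could in principle collapse the dimension count; one must re‑check the independence of the defining equations of the subspaces $\Gamma_s^{\hat{\sigma}^j}$, which, as in Theorem~\ref{notLPtype}, rests on facts of the type $h^{q^{2s}-1}\neq-1$ (\cite[Proposition~3.2]{longobardi_marino_trombetti_zhou}). The alternative route, applying Theorem~\ref{diagonal} directly to the standard form $H_{h,s}$ tested against $\eta x^{q^s}+x^{q^{5s}}$, is feasible but produces exactly the heavy manipulations one would rather avoid. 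By contrast, parts~(b) and~(c) are short and follow cleanly from the machinery already in place.
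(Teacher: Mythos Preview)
Your treatment of (b) and of (a) for $t=4$ coincides with the paper's: right idealizer for $\cK_{2t,s,\delta}$, and Theorem~\ref{notLPtype} for the linear set $L_{h,4,s}$. For (a) with $t=3$, what you flag as the ``main obstacle'' (the case $h\in\F_{q^2}$) is not treated separately in the paper at all: it simply invokes \cite[Result~3.4]{Bartoli_Zanella_Zullo}, which already gives $L_{h,3,s}\not\sim L_{2,s,\eta}$ for every admissible $h$; your proposed fallback (``transfer the complete analysis of \cite{Bartoli_Zanella_Zullo}'') \emph{is} the paper's argument, so there is nothing to re-verify. The paper also singles out $\eta=0$ and dispatches $\cG_{2,s}$ by the right-idealizer mismatch, which you absorb into the linear-set argument; both are fine.

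The genuine gap is in your handling of (c). You read the preceding theorem on $L_{h,3,s}$ as asserting ``$L_{h,3,s}\sim L_\zeta$ iff $h\in\F_q$ and $q$ is a power of $5$'' unconditionally, but that theorem states this biconditional only under the hypothesis $h\in\F_{q^2}$; for $h\notin\F_{q^2}$ it says nothing about $L_\zeta$ (it only rules out $L_{2,s}$, $L_{2,s,\eta}$, $L_{6,s,\delta}$). So your argument leaves (c) open when $h\notin\F_{q^2}$. The paper's route is different and avoids this: by \cite[Proposition~3.10]{Bartoli_Zanella_Zullo}, $\mathcal{Z}_{6,\zeta}$ is itself equivalent to a code $\cC_{h_0,3,s_0}$ inside the family (with $h_0\in\F_q$, $q$ a power of $5$), and then Theorem~\ref{main_result} determines exactly when $\cC_{h,3,s}\sim\cC_{h_0,3,s_0}$, uniformly in $h$. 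That internal-equivalence step is the key idea you are missing for (c).
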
 

\begin{proof}  
$(a)$ First assume $\eta=0$, then $\mathcal{H}_{2,s}(0)=\mathcal{G}_{2,s}$. This has left and right idealizers both isomorphic to $\F_{q^n}$.  
Since $I_{R}(\cC_{h,t,s})$ is isomorphic to $\F_{q^2}$ for both $t=3$ and $4$, then $\cC_{h,t,s}$ can not be equivalent to $\mathcal{G}_{2,s}$. On the other hand, if $\eta \neq 0$ and $t=3$, by\, \cite[Result 3.4]{Bartoli_Zanella_Zullo}, we get that $\cC_{h,t,s}$ is not equivalent to ${\mathcal H}_{2,s}(\eta)$. For $t=4$ the result directly follows from Theorem \ref{notLPtype}.\\
\noindent$(b)$\,  The statement is a direct consequence of the fact that  the right idealizer of ${\cal K}_{2t,s,\delta}$ is isomorphic to $\F_{q^t}$, \cite{Csajbok_Marino_Polverino_Zanella}.\\
$(c)$ The statement is a direct consequence of Theorem \ref{main_result} and \cite[Proposition 3.10]{Bartoli_Zanella_Zullo}.
\end{proof}

In \cite{Csajbok_Marino_Polverino_Zanella}, the  scattered subspace
\begin{equation}
    U_{n,s,\delta}=\{(x,\delta x^{q^{s}}+x^{q^{s+ n/2}}) \colon x \in \F_{q^n}\}
\end{equation}
associated with the code $\mathcal{K}_{n,s,\delta},$
was studied. As a direct consequence of \cite[Propositions 4.1 and 4.2]{Csabok_Marino_Zullo}, we can say that the scattered linear set associated with $U_{8,s,\delta}$, is equivalent to $L_{h,4,s}$ if and only if the underlying subspaces are $\GaL$-equivalent and, hence, if and only if the MRD codes $\cC_{h,4,s}$ and $\mathcal{K}_{8,s,\delta}$ are equivalent. However, as shown in Theorem \ref{th:inequivalence}$(b)$, this is not the case. Therefore, we can state the following

\begin{theorem}
The linear set $L_{h,4,s}$ is not
$\PGaL$-equivalent to $L_{2,s}$ and $L_{2,s,\eta}$  and $L_{8,s,\delta}$ of $\PG(1,q^8)$.
\end{theorem}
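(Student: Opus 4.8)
The plan is to assemble the statement from results already in hand, treating the three target families separately. For the inequivalence of $L_{h,4,s}$ with $L_{2,s}$ and with $L_{2,s,\eta}$ there is nothing further to prove: this is exactly Theorem~\ref{notLPtype}, which was obtained via the subgeometry-projection technique by checking that the least integer $\gamma$ with $\dim(\Gamma_s\cap\Gamma_s^{\hat\sigma}\cap\cdots\cap\Gamma_s^{\hat\sigma^\gamma})>5-2\gamma$ does not belong to $\{1,2\}$. So the first step is simply to invoke that theorem.

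The remaining step concerns $L_{8,s,\delta}$. The delicate point is that $\PGaL$-equivalence of two linear sets does not in general force $\GaL$-equivalence of the underlying scattered subspaces (the pair $x^q$, $x^{q^s}$ recalled in Section~2 already witnesses this gap). For the two families in play, however, the gap can be closed: under the hypotheses on $\delta$ and $q$ recalled in Theorem~\ref{th:inequivalence}(b) the subspace $U_{8,s,\delta}$ is scattered and $\mathcal{K}_{8,s,\delta}$ is MRD, and by \cite[Propositions 4.1 and 4.2]{Csabok_Marino_Zullo} the scattered linear set of $U_{8,s,\delta}$ is $\PGaL$-equivalent to $L_{h,4,s}$ if and only if $U_{8,s,\delta}$ and $U_{\psi_{h,4,s}}$ are $\GaL$-equivalent, i.e.\ (by \cite[Theorem 8]{sheekey_2015}) if and only if $\mathcal{K}_{8,s,\delta}$ and $\cC_{h,4,s}$ are equivalent as codes. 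Thus I would reduce the linear-set comparison to a code comparison.

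Finally, I would invoke Theorem~\ref{th:inequivalence}(b): the codes $\mathcal{K}_{8,s,\delta}$ and $\cC_{h,4,s}$ are inequivalent because $I_R(\mathcal{K}_{8,s,\delta})\cong\F_{q^4}$ while $I_R(\cC_{h,4,s})\cong\F_{q^2}$, and MRD codes with non-isomorphic right idealizers cannot be equivalent. Combining this with Theorem~\ref{notLPtype} gives the claim. I do not anticipate a genuine obstacle here, since the statement is essentially a corollary of the preceding development; the only point demanding care is ensuring that the structural hypotheses of \cite[Section 7]{Csajbok_Marino_Polverino_Zanella} are in force, so that \cite[Propositions 4.1 and 4.2]{Csabok_Marino_Zullo} apply and the idealizer computation underlying Theorem~\ref{th:inequivalence}(b) is valid.
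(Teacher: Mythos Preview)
Your proposal is correct and follows essentially the same route as the paper: invoke Theorem~\ref{notLPtype} for $L_{2,s}$ and $L_{2,s,\eta}$, and for $L_{8,s,\delta}$ use \cite[Propositions 4.1 and 4.2]{Csabok_Marino_Zullo} to upgrade $\PGaL$-equivalence of the linear sets to $\GaL$-equivalence of the subspaces (equivalently, code equivalence), which is then excluded by Theorem~\ref{th:inequivalence}(b) via the mismatch of right idealizers.
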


\section{Acknowledgement}
We wish to thank anonymous referees for their valuable comments and suggestions which greatly improved the readability of the article. We are also grateful to Giuseppe Marino for helpful discussions and technical advices.

\noindent
Somi Gupta, Giovanni Longobardi and Rocco Trombetti\\
Dipartimento di Matematica e Applicazioni
 “Renato Caccioppoli”\\
Università degli Studi di Napoli Federico II\\
via Cintia, Monte S. Angelo
I\\
80126 Napoli - Italy\\
email:
\texttt{\{somi.gupta, 
giovanni.longobardi, rtrombet\}@unina.it}

\end{document}